\documentclass[english,onecolumn]{IEEEtran}
\usepackage[T1]{fontenc}
\usepackage[latin9]{inputenc}
\usepackage{amsthm}
\usepackage{amsmath}
\usepackage{graphicx}
\usepackage{amssymb}
\usepackage{esint}

\makeatletter
\theoremstyle{plain}
\newtheorem{thm}{Theorem}
  \theoremstyle{plain}
  \newtheorem{lem}[thm]{Lemma}

\IEEEoverridecommandlockouts

\makeatother

\usepackage{babel}

\begin{document}
\title{Ensemble estimation of multivariate $f$-divergence}
\author{\IEEEauthorblockN{Kevin R. Moon and Alfred O. Hero III}\\ \IEEEauthorblockA{Dept. of EECS, University of Michigan, Ann Arbor, Michigan \\ Email: \{krmoon, hero\}@umich.edu }\thanks{This work was partially supported by NSF grant CCF-1217880 and a NSF Graduate Research Fellowship to the first author under Grant No. F031543.}}

\maketitle
\newcommand{\fhat}[1]{\hat{\mathbf{f}}_{#1,k_#1}}
\newcommand{\bz}[1]{b_{#1,k_2}}

\global\long\def\gtay#1#2#3{\mathbf{#1}_{#2}^{(#3)}}
\global\long\def\gk{\hat{\mathbf{G}}_{k_{1},k_{2}}}

\global\long\def\ekl#1{\hat{\mathbf{F}}_{k(#1)}}
\global\long\def\ek{\hat{\mathbf{F}}_{k_{1},k_{2}}}

\global\long\def\ehatl#1#2{\hat{\mathbf{e}}_{#1,k(#2)}}
\global\long\def\ehat#1{\hat{\mathbf{e}}_{#1,k_{#1}}}

\global\long\def\lhat{\hat{\mathbf{L}}_{k_{1},k_{2}}}
\global\long\def\lhatl#1{\hat{\mathbf{L}}_{k(#1)}}

\global\long\def\bE{\mathbb{E}}
\global\long\def\ez{\mathbb{E}_{\mathbf{Z}}}
\global\long\def\var{\mathbb{V}}
\global\long\def\bias{\mathbb{B}}

\begin{abstract}
$f$-divergence estimation is an important problem in the fields of
information theory, machine learning, and statistics. While several
divergence estimators exist, relatively few of their convergence rates
are known. We derive the MSE convergence rate for a density plug-in
estimator of $f$-divergence. Then by applying the theory of optimally
weighted ensemble estimation, we derive a divergence estimator with
a convergence rate of $O\left(\frac{1}{T}\right)$ that is simple
to implement and performs well in high dimensions. We validate our
theoretical results with experiments.
\end{abstract}

\section{Introduction}

$f$-divergence is a measure of the difference between distributions
and is important to the fields of information theory, machine learning,
and statistics~\cite{csiszar1967information}. Many different kinds
of $f$-divergences have been defined including the Kullback-Leibler
(KL)~\cite{kullback1951divergence} and R\'{e}nyi-$\alpha$~\cite{renyi1961divergence}.
A special case of the KL divergence is mutual information which gives
the capacities in data compression and channel coding~\cite{cover2006infotheory}.
Mutual information estimation has also been used in applications such
as feature selection~\cite{peng2005feature}, fMRI data processing~\cite{chai2009fmri},
and clustering~\cite{lewi2006cluster}. Entropy is also a special
case of divergence where one of the distributions is the uniform distribution.
Entropy estimation is useful for intrinsic dimension estimation~\cite{carter2010dim},
texture classification and image registration~\cite{hero2002graphs},
and many other applications. Additionally, divergence estimation is
useful for empirically estimating the decay rates of error probabilities
of hypothesis testing~\cite{cover2006infotheory} and extending machine
learning algorithms to distributional features~\cite{poczos2011divergence,oliva2013distribution}.
For other applications of divergence estimation, see~\cite{wang2009divergence}. 

We consider the problem of estimating the $f$-divergence when only
two finite populations of independent and identically distributed
(i.i.d.) samples are available from some unknown, nonparametric, smooth,
$d$-dimensional distributions. While several estimators of divergence
have been previously defined, the convergence rates are known for
only a few of them. Our first contribution is to derive convergence
rates for kernel density plug-in $f$-divergence estimators with an
adaptive $k$-nearest neighbor ($k$-nn) kernel. Our second contribution
is to extend the theory of optimally weighted ensemble entropy estimation
developed in~\cite{sricharan2013ensemble} to obtain a divergence
estimator with a convergence rate of $O\left(\frac{1}{T}\right)$
where $T$ is the sample size. This is accomplished by solving an
offline convex optimization problem.

\subsection{Related Work}

Several estimators for some $f$-divergences already exist. For example,
P\'{o}czos \& Schneider~\cite{poczos2011divergence} established
weak consistency of a bias-corrected $k$-nn estimator for R\'{e}nyi-$\alpha$
and other divergences of similar form. Wang et al~\cite{wang2009divergence}
gave an estimator for the KL divergence. Other mutual information
and divergence estimators based on plug-in histogram schemes have
been proven to be consistent~\cite{darbellay1999MIest,wang2005divergencepart,silva2010partition,le2013partition}.
However none of these works studied the convergence rates of their
estimators while our ensemble approach requires an explicit expression
of the asymptotic bias and variance. Hero et al~\cite{hero2002graphs}
provided an estimator for R\'{e}nyi-$\alpha$ divergence but assumed
that one of the densities was known. 

Nguyen et al~\cite{nguyen2010divergence} proposed a method for estimating
$f$-divergences by estimating the likelihood ratio of the two densities
by solving a convex optimization problem and then plugging it into
the divergence formulas. For this method they prove that the minimax
convergence rate is parametric ($O\left(\frac{1}{T}\right)$) when
the likelihood ratio is in the bounded H\"{o}lder class $\Sigma_{\kappa}(\beta,L,r)$
with $\beta\geq d/2$. This assumption is weaker than ours which requires
the densities to be at least $d$ times differentiable. However, solving
the convex problem of~\cite{nguyen2010divergence} is similar in
complexity to training the SVM (between $O(T^{2})$ and $O(T^{3})$)
which can be demanding when $T$ is very large. In contrast, our method
of optimally weighted ensemble estimation depends only on simple density
plug-in estimates and an offline convex optimization problem. Thus
the most computationally demanding step in our approach is the calculation
of the $k$-nn distances which has complexity no greater than $O(T^{2})$.

Singh and P\'{o}czos~\cite{singh2014exp} provided an estimator
for R\'{e}nyi-$\alpha$ divergences that uses a {}``mirror image''
kernel density estimator. They prove a convergence rate of $O\left(\frac{1}{T}\right)$
when $\beta\geq d$ for each of the densities. However this method
requires several computations at each boundary of the support of the
densities which becomes difficult to implement as $d$ gets large.
Also, this method requires knowledge of the support of the densities
which may not be possible for some problems.

The main results of our paper are as follows. First, under the assumption
that the densities are smooth, lower bounded, and have bounded support,
the mean squared error (MSE) of a kernel density plug-in estimator
of $f$-divergence converges to zero at the non-parametric rate of
$O\left(T^{-1/d}\right),$ which becomes exceedingly slow as dimension
$d$ increases. Second, the proposed weighted ensemble estimator of
divergence is simple to implement and its MSE converges at the parametric
rate of $O\left(\frac{1}{T}\right)$. Third, the proposed estimator
of divergence is shown by simulation to outperform standard kernel
density plug-in estimators for modest sample sizes ($T\geq400$) and
in high dimensions ($d\geq4$). Finally, the proposed divergence estimator
performs well even for densities with unbounded support (Gaussian),
suggesting that our theory holds under significantly weaker assumptions.

\subsection{Organization and Notation}

The paper is organized as follows. Section~\ref{sec:Weighted_ensemble}
provides the theory underlying the optimally weighted ensemble estimator.
Section~\ref{sec:div_application} applies this theory to $f$-divergence
estimation and gives convergence results for the estimators, while
Section~\ref{sec:Proofs} provides proofs. Section~\ref{sec:Experiments}
gives some experimental results that illustrate the performance of
our estimators as a function of $T$ and $d$. Section~\ref{sec:Conclusion}
concludes the paper.

Bold face type is used for random variables and random vectors. Let
$f_{1}$ and $f_{2}$ be densities and define $L(x)=\frac{f_{1}(x)}{f_{2}(x)}$.
The conditional expectation given a random variable $\mathbf{Z}$
is denoted $\mathbb{E}_{\mathbf{Z}}$. The variance of a random variable
is denoted $\var$ and the bias of an estimator is denoted $\bias$.

\section{Weighted ensemble estimation\label{sec:Weighted_ensemble}}

Let $\bar{l}=\left\{ l_{1},\dots,l_{L}\right\} $ be a set of index
values and $T$ the number of samples available. For an indexed ensemble
of estimators $\left\{ \hat{\mathbf{E}}_{l}\right\} _{l\in\bar{l}}$
of the parameter $E$, the weighted ensemble estimator with weights
$w=\left\{ w\left(l_{1}\right),\dots,w\left(l_{L}\right)\right\} $
satisfying $\sum_{l\in\bar{l}}w(l)=1$ is defined as \[
\hat{\mathbf{E}}_{w}=\sum_{l\in\bar{l}}w\left(l\right)\hat{\mathbf{E}}_{l}.\]
$\hat{\mathbf{E}}_{w}$ is asyptotically unbiased if the estimators
$\left\{ \hat{\mathbf{E}}_{l}\right\} _{l\in\bar{l}}$ are asymptotically
unbiased. Typically the MSE of a plug-in estimator is dominated by
the bias. The key idea to reducing MSE is that by choosing appropriate
weights $w$, we can greatly decrease the bias in exchange for some
increase in variance. Suppose the following conditions are satisfied
by $\left\{ \hat{\mathbf{E}}_{l}\right\} _{l\in\bar{l}}$~\cite{sricharan2013ensemble}:
\begin{itemize}
\item $\mathcal{C}.1$ The bias is given by \[
\bias\left(\hat{\mathbf{E}}_{l}\right)=\sum_{i\in J}c_{i}\psi_{i}(l)T^{-i/2d}+O\left(\frac{1}{\sqrt{T}}\right),\]
 where $c_{i}$ are constants depending on the underlying density,
$J=\left\{ i_{1},\dots,i_{I}\right\} $ is a finite index set with
$I<L$, $\min(J)>0$ and $\max(J)\leq d$, and $\psi_{i}(l)$ are
basis functions depending only on the parameter $l$. 
\item $\mathcal{C}.2$ The variance is given by \[
\var\left[\hat{\mathbf{E}}_{l}\right]=c_{v}\left(\frac{1}{T}\right)+o\left(\frac{1}{T}\right).\]
\end{itemize}
\begin{thm}
\cite{sricharan2013ensemble} \label{thm:ensemble}Assume conditions
$\mathcal{C}.1$ and $\mathcal{C}.2$ hold for an ensemble of estimators
$\left\{ \hat{\mathbf{E}}_{l}\right\} _{l\in\bar{l}}$. Then there
exists a weight vector $w_{0}$ such that \[
\mathbb{E}\left[\left(\hat{\mathbf{E}}_{w_{0}}-E\right)^{2}\right]=O\left(\frac{1}{T}\right).\]
The weight vector $w_{0}$ is the solution to the following convex
optimization problem:\[
\begin{array}{rl}
\min_{w} & ||w||_{2}\\
subject\, to & \sum_{l\in\bar{l}}w(l)=1,\\
 & \gamma_{w}(i)=\sum_{l\in\bar{l}}w(l)\psi_{i}(l)=0,\, i\in J.\end{array}\]

\end{thm}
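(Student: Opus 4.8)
The plan is to exhibit the weight vector $w_0$ directly as the solution of the stated optimization problem and then bound the MSE of $\hat{\mathbf{E}}_{w_0}$ by controlling its bias and variance separately. First I would argue that the feasible set is nonempty: the constraints $\sum_{l\in\bar l}w(l)=1$ and $\gamma_w(i)=0$ for $i\in J$ form a system of $|J|+1 = I+1$ linear equations in the $L$ unknowns $w(l)$, and since $I<L$ by condition $\mathcal{C}.1$, this underdetermined system has a solution (generically a whole affine subspace of them) provided the rows are not inconsistent; one checks that the coefficient matrix with rows $(1,\dots,1)$ and $(\psi_i(l_1),\dots,\psi_i(l_L))$ has full row rank for a suitable choice of the index set $\bar l$, so the system is consistent. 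Minimizing the strictly convex objective $\|w\|_2$ over this nonempty affine (hence convex) feasible set yields a unique minimizer $w_0$, and crucially $\|w_0\|_2$ is a finite constant that does \emph{not} grow with $T$.

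Next I would compute the bias of $\hat{\mathbf{E}}_{w_0}$. By linearity of expectation, $\bias(\hat{\mathbf{E}}_{w_0}) = \sum_{l\in\bar l} w_0(l)\,\bias(\hat{\mathbf{E}}_l)$. Substituting the expansion from $\mathcal{C}.1$ and interchanging the finite sums gives
\[
\bias(\hat{\mathbf{E}}_{w_0}) = \sum_{i\in J} c_i\left(\sum_{l\in\bar l} w_0(l)\psi_i(l)\right)T^{-i/2d} + \left(\sum_{l\in\bar l}|w_0(l)|\right)O\!\left(\frac{1}{\sqrt T}\right).
\]
The bracketed inner sum is exactly $\gamma_{w_0}(i)$, which equals $0$ by the constraints, so every term in the first sum vanishes. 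The remainder is $O(T^{-1/2})$ since $\sum_l |w_0(l)| \le \sqrt{L}\,\|w_0\|_2$ is a bounded constant. Hence $\bias(\hat{\mathbf{E}}_{w_0}) = O(1/\sqrt T)$, so the squared bias is $O(1/T)$.

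For the variance, I would use $\var[\hat{\mathbf{E}}_{w_0}] = \var\big[\sum_{l\in\bar l} w_0(l)\hat{\mathbf{E}}_l\big]$. Expanding this as a double sum of covariances and applying the Cauchy--Schwarz inequality $|\mathrm{Cov}(\hat{\mathbf{E}}_{l},\hat{\mathbf{E}}_{l'})| \le \sqrt{\var[\hat{\mathbf{E}}_l]\var[\hat{\mathbf{E}}_{l'}]}$ together with $\mathcal{C}.2$ gives $\var[\hat{\mathbf{E}}_{w_0}] \le \big(\sum_{l\in\bar l}|w_0(l)|\big)^2 \big(c_v/T + o(1/T)\big) = O(1/T)$, again because $\sum_l |w_0(l)|$ is a fixed constant. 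Combining with the squared-bias bound via the standard decomposition $\mathbb{E}[(\hat{\mathbf{E}}_{w_0}-E)^2] = \bias(\hat{\mathbf{E}}_{w_0})^2 + \var[\hat{\mathbf{E}}_{w_0}]$ yields the claimed $O(1/T)$ rate. The one genuinely delicate point is verifying feasibility/full-rank of the constraint system — i.e., that one can choose the ensemble index set $\bar l$ so that the $\psi_i$ do not conspire to make the linear system inconsistent — and ensuring that the resulting $\|w_0\|_2$ stays bounded as $T\to\infty$; everything else is bookkeeping with finite sums. Since this is precisely the content of Theorem~\ref{thm:ensemble} as established in~\cite{sricharan2013ensemble}, I would cite that reference for the feasibility argument and present the bias/variance computation as above.
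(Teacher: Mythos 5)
Your proposal is correct and follows essentially the same argument as the proof in the cited reference \cite{sricharan2013ensemble} (the paper itself only states this theorem by citation): the constraints annihilate the $T^{-i/2d}$ bias terms, the remainder and the variance are controlled by the boundedness of $\|w_0\|_2$, and feasibility follows since the $I+1$ linear constraints in $L>I$ unknowns are consistent for distinct index values $l$ with $\psi_i(l)=l^{i/d}$. One small simplification you could note is that the optimization problem does not depend on $T$ at all, so $\|w_0\|_2$ is automatically a fixed constant and no uniformity-in-$T$ argument is needed.
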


\section{Application to divergence estimation\label{sec:div_application}}

Theorem~\ref{thm:ensemble} was applied in~\cite{sricharan2013ensemble}
to obtain an entropy estimator with parametric convergence rates $O\left(\frac{1}{T}\right).$
An analogous theorem will be presented that applies ensemble estimation
of estimators of $f$-divergence. Specifically, we focus on divergences
that include the form~\cite{csiszar1967information} \begin{equation}
G(f_{1},f_{2})=\int g\left(\frac{f_{1}(x)}{f_{2}(x)}\right)f_{2}(x)dx,\label{eq:fdivergences}\end{equation}
for some smooth, convex function $g(f)$. Divergences that have this
form include the Renyi divergence ($g(x)=x^{\alpha}$) and the KL
divergence ($g(x)=-\ln x$). We assume that the $d$-dimensional multivariate
densities $f_{1}$ and $f_{2}$ have finite support $\mathcal{S}=\left[a,b\right]^{d}$.
Assume that $T=N+M_{2}$ i.i.d. realizations $\left\{ \mathbf{X}_{1},\dots,\mathbf{X}_{N},\mathbf{X}_{N+1},\dots,\mathbf{X}_{N+M_{2}}\right\} $
are available from the density $f_{2}$ and $M_{1}$ i.i.d. realizations
$\left\{ \mathbf{Y}_{1},\dots,\mathbf{Y}_{M_{1}}\right\} $ are available
from the density $f_{1}$.

We use $k$-nn density estimators in our proposed $f$-divergence
estimator. Assume that $k_{i}\leq M_{i}.$ Let $\rho_{2,k_{2}}(i)$
be the distance of the $k_{2}$th nearest neighbor of $X_{i}$ in
$\left\{ X_{N+1},\dots,X_{T}\right\} $ and let $\rho_{1,k_{1}}(i)$
be the distance of the $k_{1}$th nearest neighbor of $X_{i}$ in
$\left\{ Y_{1},\dots,Y_{M_{1}}\right\} .$ Then the $k$-nn density
estimate is~\cite{loftsgaarden1965knn}\[
\fhat{i}(X_{j})=\frac{k_{i}}{M_{i}\bar{c}\mathbf{\mathbf{\rho}}_{i,k_{i}}^{d}(j)},\]
where $\bar{c}$ is the volume of a $d$-dimensional unit ball.

The plug-in estimator of divergence is constructed similarly to~\cite{sricharan2013ensemble}.
The data from $f_{2}$ are randomly divided into two parts $\left\{ \mathbf{X}_{1},\dots,\mathbf{X}_{N}\right\} $
and $\left\{ \mathbf{X}_{N+1},\dots,\mathbf{X}_{N+M_{2}}\right\} $.
The density estimate $\fhat{2}$ is found at the $N$ points $\left\{ \mathbf{X}_{1},\dots,\mathbf{X}_{N}\right\} $
using the $M_{2}$ realizations $\left\{ \mathbf{X}_{N+1},\dots,\mathbf{X}_{N+M_{2}}\right\} $.
Splitting the data in this manner is a common approach to debiasing
and variance reduction in non-parametric estimation. Similarly, the
density estimate $\fhat{1}$ is found at the $N$ points $\left\{ \mathbf{X}_{1},\dots,\mathbf{X}_{N}\right\} $
using the $M_{1}$ realizations $\left\{ \mathbf{Y}_{1},\dots,\mathbf{Y}_{M_{1}}\right\} $.
Define $\lhat(x)=\frac{\fhat{1}(x)}{\fhat{2}(x)}.$ The functional
$G(f_{1},f_{2})$ is then approximated as \begin{equation}
\gk=\frac{1}{N}\sum_{i=1}^{N}g\left(\lhat\left(\mathbf{X}_{i}\right)\right).\label{eq:estimator}\end{equation}
This is a plug-in estimator in the sense that we plug in the estimates
to the argument of the expectation, and then use the empirical average
to calculate the expectation.

Similar to~\cite{sricharan2013ensemble}, the principal assumptions
we make on the densities $f_{1}$ and $f_{2}$ and the functional
$g$ are that: 1) $f_{1}$, $f_{2},$ and $g$ are smooth; 2) $f_{1}$
and $f_{2}$ have common bounded support sets $\mathcal{S}$; 3) $f_{1}$
and $f_{2}$ are strictly lower bounded. Specifically: 
\begin{itemize}
\item $(\mathcal{A}.0)$: Assume that $k_{i}=k_{0}M_{i}^{\beta}$ with $0<\beta<1$,
that $M_{2}=\alpha_{frac}T$ with $0<\alpha_{frac}<1$. 
\item $(\mathcal{A}.1)$: Assume there exist constants $\epsilon_{0},\epsilon_{\infty}$
such that $0<\epsilon_{0}\leq f_{i}(x)\leq\epsilon_{\infty}<\infty,\,\forall x\in S.$ 
\item $(\mathcal{A}.2)$: Assume that the densities $f_{i}$ have continuous
partial derivatives of order $d$ in the interior of $\mathcal{S}$
that are upper bounded. 
\item $(\mathcal{A}.3)$: Assume that $g$ has derivatives $g^{(j)}$ of
order $j=1,\dots,\max\{\lambda,d\}$ where $\lambda\beta>1$. 
\item $(\mathcal{A}.4$): Assume that $\left|g^{(j)}\left(f_{1}(x)/f_{2}(x)\right)\right|$,
$j=0,\ldots,\max\{\lambda,d\}$ are strictly upper bounded for $\epsilon_{0}\leq f_{i}(x)\leq\epsilon_{\infty}.$ 
\item $(\mathcal{A}.5)$: Let $\epsilon\in(0,1)$, $\delta\in(2/3,1)$,
and $\mathcal{C}(k)=\exp\left(-3k^{(1-\delta)}\right).$ For fixed
$\epsilon,$ define $p_{l,i}=(1-\epsilon)\epsilon_{0}\frac{k_{i}-1}{M_{i}}$,
$p_{u,i}=(1+\epsilon)\epsilon_{\infty}\frac{k_{i}-1}{M_{i}}$, $q_{l,i}=\frac{k_{i}-1}{M_{i}\bar{c}D^{d}},$
and $q_{u,i}=(1+\epsilon)\epsilon_{\infty}$ where $D$ is the diameter
of the support $S.$ Let $\mathbf{P}_{i}$ be a beta distributed random
variable with parameters $k_{i}$ and $M_{i}-k_{i}+1.$ Define $p_{l}=\frac{p_{l,1}}{p_{u,2}}$
and $p_{u}=\frac{p_{u,1}}{p_{l,2}}$. Assume that for $U(L)=g(L),\, g^{(3)}(L),$
and $g^{(\lambda)}(L),$

\begin{itemize}
\item $(i)\,\mathbb{E}\left[\sup_{L\in(p_{l},p_{u})}\left|U\left(L\frac{\mathbf{P}_{2}}{\mathbf{P}_{1}}\right)\right|\right]=G_{1}<\infty$, 
\item $(ii)\,\sup_{L\in\left(\frac{q_{l,1}}{q_{u,2}},\frac{q_{u,1}}{q_{l,2}}\right)}\left|U\left(L\right)\right|\mathcal{C}\left(k_{1}\right)\mathcal{C}\left(k_{2}\right)=G_{2}<\infty,$ 
\item $(iii)\,\mathbb{E}\left[\sup_{L\in\left(\frac{q_{l,1}}{p_{u,2}},\frac{q_{u,1}}{p_{l,2}}\right)}\left|U\left(L\mathbf{P}_{2}\right)\right|\mathcal{C}\left(k_{1}\right)\right]=G_{3}<\infty,$ 
\item $(iv)\,\mathbb{E}\left[\sup_{L\in\left(\frac{p_{l,1}}{q_{u,2}},\frac{p_{u,1}}{q_{l,2}}\right)}\left|U\left(\frac{L}{\mathbf{P}_{1}}\right)\right|\mathcal{C}\left(k_{2}\right)\right]=G_{4}<\infty,\,\forall M_{i}.$ 
\end{itemize}
\end{itemize}
Densities for which assumptions $\mathcal{A}.0-\mathcal{A}.5$ hold
include the truncated Gaussian distribution and the Beta distribution
on the unit cube. Functions for which the assumptions hold include
$g(L)=-\ln L$ and $g(L)=L^{\alpha}.$

\subsection{Analysis of mean squared error}

The following hold under assumptions $\mathcal{A}.0-\mathcal{A}.5$:
\begin{thm}
\label{thm:bias}The bias of the plug-in estimator $\gk$ is given
by\[
\bias\left(\gk\right)=\sum_{j=1}^{d}\left(c_{6,j,1}\left(\frac{k_{1}}{M_{1}}\right)^{\frac{j}{d}}+c_{6,j,2}\left(\frac{k_{2}}{M_{2}}\right)^{\frac{j}{d}}\right)+\left(c_{4,1}+c_{4,2}+c_{6,3}\right)\left(\frac{1}{k_{2}}\right)\]
\[
+c_{4,3}\left(\frac{1}{k_{1}}\right)+o\left(\frac{1}{k_{1}}+\frac{1}{k_{2}}+\frac{k_{1}}{M_{1}}+\frac{k_{2}}{M_{2}}\right).\]

\end{thm}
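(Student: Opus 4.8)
The plan is to compute $\bias\left(\gk\right)$ by conditioning on the evaluation points and Taylor-expanding $g$ about the true likelihood ratio $L$. Since $\mathbf{X}_1,\dots,\mathbf{X}_N$ are i.i.d.\ and independent of the samples used to form $\fhat{1}$ and $\fhat{2}$, linearity of expectation together with $\bE[g(L(\mathbf{X}_1))]=G(f_1,f_2)$ gives
\[\bias\left(\gk\right)=\bE\left[g\left(\lhat(\mathbf{X}_1)\right)\right]-G(f_1,f_2)=\bE_{\mathbf{X}_1}\left[\ez\left[g\left(\lhat(\mathbf{X}_1)\right)-g\left(L(\mathbf{X}_1)\right)\right]\right],\]
where $\ez$ denotes expectation over the samples defining $\fhat{1}$ and $\fhat{2}$ with the evaluation point held fixed, so it suffices to expand the inner conditional expectation at a fixed $x\in\mathcal{S}$ and then integrate against $f_2$.

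First I would control the $k$-nn estimates pointwise, writing $\fhat{i}(x)=\ez[\fhat{i}(x)]+\ehat{i}(x)$ with $\ez[\ehat{i}(x)]=0$. Two facts are needed. (a) The conditional bias $\bias_{\mathbf{Z}}[\fhat{i}(x)]:=\ez[\fhat{i}(x)]-f_i(x)$ admits the expansion
\[\bias_{\mathbf{Z}}[\fhat{i}(x)]=\sum_{j=1}^{d}c_{i,j}(x)\left(\frac{k_i}{M_i}\right)^{j/d}+\frac{c_i'(x)}{k_i}+o\!\left(\frac{1}{k_i}+\frac{k_i}{M_i}\right),\]
with coefficients depending only on $f_i$ and its derivatives at $x$: the mass $\mathbf{P}_i$ of the $k_i$-nn ball about $x$ is (conditionally) $\mathrm{Beta}(k_i,M_i-k_i+1)$, so $\bE[1/\mathbf{P}_i]=M_i/(k_i-1)$ yields the $1/k_i$ term, while Taylor-expanding $f_i$ inside a ball of radius of order $(k_i/M_i)^{1/d}$ — and accounting for truncation of that ball near $\partial\mathcal{S}$, which generates the odd-$j$ terms — yields the $(k_i/M_i)^{j/d}$ terms up to $j=d$ using $(\mathcal{A}.2)$. (b) The same Beta concentration gives $\ez[\ehat{i}(x)^2]=\Theta(1/k_i)$ and $\ez\left[|\ehat{i}(x)|^{m}\right]=O(k_i^{-m/2})$ for the finitely many $m$ we use, and $\ehat{1}(x)$, $\ehat{2}(x)$ are independent given $x$ because $\fhat{1}$ and $\fhat{2}$ are built from disjoint, independent sample sets.

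Next I would Taylor-expand $g$. Writing $\lhat(x)-L(x)$ in terms of $\fhat{1}(x)-f_1(x)$ and $\fhat{2}(x)-f_2(x)$ via the expansion of $u\mapsto\fhat{1}(x)/(f_2(x)+u)$ brings in factors $f_2^{-1}$, $-f_1f_2^{-2}$, $f_1f_2^{-3},\dots$ multiplying the errors and their products; substituting into $g(\lhat)-g(L)=g'(L)(\lhat-L)+\tfrac{1}{2}g''(L)(\lhat-L)^2+\cdots$ and applying $\ez$, the terms linear in $\ehat{i}$ vanish and the survivors are: (i) $g'(L)$ times the conditional biases of part (a), producing the $(k_1/M_1)^{j/d}$, $(k_2/M_2)^{j/d}$, $1/k_1$, and $1/k_2$ contributions; (ii) the second-order fluctuation terms $\ez[\ehat{1}(x)^2]=\Theta(1/k_1)$ (entering only through $g''$, since $\fhat{1}$ appears linearly in $\lhat$) and $\ez[\ehat{2}(x)^2]=\Theta(1/k_2)$ (entering both through $g''$ and through the $f_1f_2^{-3}$ term in the expansion of $1/\fhat{2}$, which is why $1/k_2$ collects several coefficients), with cross term $\ez[\ehat{1}(x)\ehat{2}(x)]=0$. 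All remaining terms — products of two conditional biases, third- and higher-order Taylor terms, and mixed bias--fluctuation terms — are $o(1/k_1+1/k_2+k_1/M_1+k_2/M_2)$. Integrating the $x$-dependent coefficients against $f_2$ and collecting like terms gives the constants $c_{6,j,i}$, $c_{4,i}$, $c_{6,3}$, $c_{4,3}$ of the stated form.

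The main obstacle is making the passage from the pointwise expansions to the integrated statement rigorous: the $k$-nn estimates can be arbitrarily inaccurate on a low-probability event, and the Taylor remainders involve $g^{(j)}$ evaluated at $\lhat(x)$, which need not stay near $L(x)$. I would split on the event $\{\mathbf{P}_i\in(p_{l,i},p_{u,i})\}$ — equivalently, $\fhat{i}(x)$ lying in a fixed band about $f_i(x)$ — whose complement has probability at most $\mathcal{C}(k_i)=\exp(-3k_i^{1-\delta})$ by a Chernoff bound for the Beta distribution. On the good event, all Taylor expansions, moment estimates, and interchanges of expectation and integration are justified by $(\mathcal{A}.1)$--$(\mathcal{A}.4)$ and dominated convergence; on the bad event, the contribution is controlled using conditions $(ii)$--$(iv)$ of $(\mathcal{A}.5)$, which are tailored precisely so that $\mathcal{C}(k_i)$ times the worst-case magnitude of $g^{(j)}(\lhat(x))$ is $o(1/k_1+1/k_2)$. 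Careful bookkeeping of the orders of the discarded terms then completes the proof.
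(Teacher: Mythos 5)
Your proposal is correct and follows essentially the same route as the paper's proof: a Taylor expansion of $g$, conditional moment bounds $\ez\left[\ehat{i}^{q}(\mathbf{Z})\right]=O\left(k_{i}^{-q/2}\right)$ with conditional independence of $\ehat{1}$ and $\ehat{2}$ killing the cross term, the $\Theta(1/k_{2})$ correction arising from $\ez\left[1/\fhat{2}(\mathbf{Z})\right]-1/\ez\fhat{2}(\mathbf{Z})$, and control of the Taylor remainder on the low-probability bad event via Beta/Chernoff concentration together with assumption $(\mathcal{A}.5)$ (the paper's Lemma~\ref{lem:bounded}). The only differences are organizational rather than substantive: the paper splits the bias as $\mathbb{E}\left[g\left(\lhat(\mathbf{Z})\right)-g\left(\ez\lhat(\mathbf{Z})\right)\right]+\mathbb{E}\left[g\left(\ez\lhat(\mathbf{Z})\right)-g\left(L(\mathbf{Z})\right)\right]$ and expands each piece separately (isolating the $1/k_{i}$ variance-induced terms via Lemma~\ref{lem:ekhat} from the $(k_{i}/M_{i})^{j/d}$ smoothing terms), whereas you expand once around $L$ and must therefore also dispose of the mixed bias--fluctuation and bias-squared cross terms; and you carry an additional $1/k_{i}$ term in the conditional bias of $\fhat{i}$ itself, which the paper omits and which in any case is absorbed into the unspecified constants.
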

Figure~\ref{fig:heatmap} gives a heatmap showing the leading term
$O\left(\left(\frac{k}{M}\right)^{1/d}\right)$ as a function of $d$
and $M$. 

\begin{figure}
\centering

\includegraphics[width=0.4\textwidth]{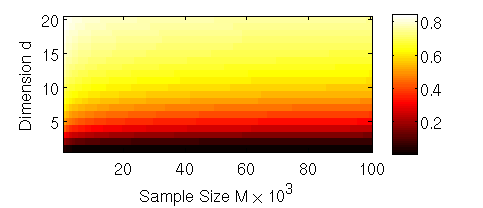}

\caption{Heat map of predicted bias of non-averaged $f$-divergence estimator
based on Theorem~\ref{thm:bias} as a function of dimension and sample
size. Note the phase transition as dimension $d$ increases for fixed
sample size $M$: bias remains small only for relatively small values
of $d.$ The proposed weighted ensemble averaged estimator removes
this phase transition when the densities are sufficiently smooth.
\label{fig:heatmap}}
\end{figure}

\begin{thm}
\label{thm:variance}The variance of the plug-in estimator $\gk$
is \[
\var\left[\gk\right]=c_{9}\left(\frac{1}{N}\right)+c_{8,1}\left(\frac{1}{M_{1}}\right)+c_{8,2}\left(\frac{1}{M_{2}}\right)+o\left(\frac{1}{M_{1}}+\frac{1}{M_{2}}+\frac{1}{N}+\frac{1}{k_{1}^{2}}+\frac{1}{k_{2}^{2}}\right).\]

\end{thm}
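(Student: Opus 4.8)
\emph{Plan.} The idea is to peel off, one at a time, the three mutually independent sources of randomness in $\gk$: the $N$ evaluation samples $\mathbf{X}_1,\dots,\mathbf{X}_N\sim f_2$, the $M_1$ samples $\mathbf{Y}_1,\dots,\mathbf{Y}_{M_1}\sim f_1$ that determine $\fhat{1}$, and the $M_2$ samples $\mathbf{X}_{N+1},\dots,\mathbf{X}_T\sim f_2$ that determine $\fhat{2}$. Put $\mathcal{F}_1=\sigma(\mathbf{Y}_1,\dots,\mathbf{Y}_{M_1})$ and $\mathcal{F}_2=\sigma(\mathbf{X}_{N+1},\dots,\mathbf{X}_T)$. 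Conditioned on $(\mathcal{F}_1,\mathcal{F}_2)$ the function $\lhat(\cdot)$ is deterministic and the summands $g(\lhat(\mathbf{X}_i))$, $i=1,\dots,N$, are i.i.d., so the law of total variance gives
\[
\var\left[\gk\right]=\frac1N\,\mathbb{E}\left[\var\left(g(\lhat(\mathbf{X}_1))\mid\mathcal{F}_1,\mathcal{F}_2\right)\right]+\var\left[\int g(\lhat(x))f_2(x)\,dx\right],
\]
and a second application to the last term, conditioning on $\mathcal{F}_2$, splits it into $\mathbb{E}\left[\var\left(\int g(\lhat(x))f_2(x)\,dx\mid\mathcal{F}_2\right)\right]$ and $\var\left[\int\mathbb{E}_{\mathbf{Y}}\left[g(\lhat(x))\right]f_2(x)\,dx\right]$. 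I expect these three pieces to contribute the $c_9/N$, $c_{8,1}/M_1$, and $c_{8,2}/M_2$ terms respectively, with all other contributions falling into the stated $o(\cdot)$.

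\emph{The $1/N$ term.} Because $\mathbf{X}_1$ is independent of $(\mathcal{F}_1,\mathcal{F}_2)$, the inner conditional variance equals $\int g(\lhat(x))^2f_2(x)\,dx-\left(\int g(\lhat(x))f_2(x)\,dx\right)^2$. Since $\lhat(x)\to L(x)$ in probability, uniformly over $\mathcal{S}$ --- a by-product of the bias analysis together with the concentration built into $\mathcal{A}.5$ --- and $g$ is continuous with the relevant suprema integrable by $\mathcal{A}.4$--$\mathcal{A}.5$, dominated convergence yields $\mathbb{E}\left[\var\left(g(\lhat(\mathbf{X}_1))\mid\mathcal{F}_1,\mathcal{F}_2\right)\right]\to\var_{f_2}\left[g(L(\mathbf{X}))\right]=:c_9$, so this piece equals $c_9/N+o(1/N)$.

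\emph{The $1/M_1$ and $1/M_2$ terms.} With $\mathcal{F}_2$ frozen, $\Phi:=\int g(\fhat{1}(x)/\fhat{2}(x))f_2(x)\,dx$ is a symmetric function of the $M_1$ i.i.d.\ vectors $\mathbf{Y}_j$, so the Efron--Stein inequality gives $\var\left(\Phi\mid\mathcal{F}_2\right)\le\tfrac{M_1}{2}\mathbb{E}\left[(\Phi-\Phi')^2\mid\mathcal{F}_2\right]$, where $\Phi'$ resamples a single $\mathbf{Y}_j$. Resampling one point perturbs $\fhat{1}(x)$ only on the set of $x$ for which the removed or the replacement point is among the $k_1$ nearest $f_1$-samples to $x$; a covering argument bounds the volume of this set by $O(k_1/M_1)$, on it the $k_1$-nn density estimate changes by a relative amount $O(1/k_1)$ (a spacing between consecutive nearest-neighbor order statistics), and $g$ is Lipschitz over the range permitted by $\mathcal{A}.3$--$\mathcal{A}.5$, so $|\Phi-\Phi'|=O(1/M_1)$ and hence $\var\left(\Phi\mid\mathcal{F}_2\right)=O(1/M_1)$; the identical argument applied to $\mathbb{E}\left[\Phi\mid\mathcal{F}_2\right]$ as a function of $\mathbf{X}_{N+1},\dots,\mathbf{X}_T$ gives $O(1/M_2)$. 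To upgrade these bounds to the exact constants $c_{8,1},c_{8,2}$ and to account for the $o(1/k_1^2+1/k_2^2)$ remainder, I would reuse the expansion machinery from the proof of Theorem~\ref{thm:bias}: Taylor-expand $g(\fhat{1}(x)/\fhat{2}(x))$ about $g(L(x))$, compute the leading covariance using the beta-distribution moments of $\rho_{i,k_i}$ supplied by $\mathcal{A}.5$, and dispose of the atypical events $\{\fhat{i}\notin(p_{l,i},p_{u,i})\}$ via the factor $\mathcal{C}(k)$ of $\mathcal{A}.5$, with all higher-order terms landing in the $o(\cdot)$.

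\emph{Main obstacle.} The delicate step is the single-sample perturbation bound for the $k$-nn distances: showing that $\rho_{i,k_i}$, and therefore $\fhat{i}$, moves by the claimed $O(1/k_i)$ relative amount uniformly in $x$ --- including near $\partial\mathcal{S}$, where boundary bias distorts the estimate --- and with enough moment control that the rare events on which $\fhat{i}$ is not within a constant factor of $f_i$ contribute only $o(1/T)$. Assumptions $\mathcal{A}.1$--$\mathcal{A}.5$ are engineered precisely to make this go through, so the real work is verifying that the suprema $G_1,\dots,G_4$ dominate every term produced by the Taylor expansion and by the Efron--Stein splitting.
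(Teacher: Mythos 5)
Your scaffolding---law of total variance over the density-estimation samples $\mathcal{F}_1,\mathcal{F}_2$, followed by Efron--Stein on the conditional pieces---is a genuinely different decomposition from the paper's, which instead writes $\var(\gk)=\frac1N\mathbb{E}\left[\left(\mathbf{p}_1+\mathbf{q}_1+\mathbf{r}_1+\mathbf{s}_1\right)^2\right]+\frac{N-1}{N}\mathbb{E}\left[\left(\mathbf{p}_1+\cdots\right)\left(\mathbf{p}_2+\cdots\right)\right]$ and evaluates the cross term directly. You correctly isolate the three sources of fluctuation, and your first piece is essentially the paper's $\mathbf{p}$-term giving $c_9/N$. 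But as written the proposal does not prove the theorem. Efron--Stein is intrinsically one-sided: it can at best yield $\var\left(\Phi\mid\mathcal{F}_2\right)=O(1/M_1)$, never the asserted expansion $c_{8,1}/M_1+o(\cdot)$ with an identified leading constant. In the paper that constant arises from a \emph{signed} two-point covariance---Lemma~\ref{lem:cov_split} shows $Cov\left[\ehat i(X),\ehat i(Y)\right]=-f_i(X)f_i(Y)/M_i+o(1/M_i)$ for evaluation points separated by more than $2(k_i/M_i)^{1/d}$, and $O(1/k_i)$ on the small-measure set of nearby pairs---which an upper bound cannot see. An order bound would suffice for condition $\mathcal{C}.2$ downstream, but not for Theorem~\ref{thm:variance} as stated.

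The step you defer (``reuse the expansion machinery \dots\ compute the leading covariance'') is precisely the hard content of the paper's proof: Lemma~\ref{lem:cov_ekhat} must propagate the two-point covariance of Lemma~\ref{lem:cov_split} through the Taylor expansion of $g$ in powers of $\ek$, and show that every product involving $\ehat 1^{q}\ehat 2^{r}$ with $q$ or $r\ge2$ contributes only $o\left(\frac{1}{M_1}+\frac{1}{M_2}+\frac{1}{k_1^2}+\frac{1}{k_2^2}\right)$; none of that is reconstructed. Likewise, your single-sample perturbation heuristic (relative change $O(1/k_1)$ in $\fhat{1}$ on a set of measure $O(k_1/M_1)$) is plausible for the first moment, but Efron--Stein needs $\mathbb{E}\left[(\Phi-\Phi')^2\right]$, and the rare configurations in which one resampled point perturbs the $k_1$-nn distance of many queries, or in which $\fhat{i}$ leaves the band $(p_{l,i},p_{u,i})$, are exactly what the beta-distribution and concentration machinery of $\mathcal{A}.5$ and Lemma~\ref{lem:bounded} are built to control; you name this obstacle but do not resolve it. In short: right orders and the right intuition for where each term originates, but the constants and the remainder---the actual assertion of the theorem---rest on an unexecuted appeal to the paper's own covariance lemmas.
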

Note that the constants in front of the terms that depend on $k_{i}$
and $M_{i}$ are not identical for different $i$. However, these
constants depend on the densities $f_{1}$ and $f_{2}$ which are
often unknown and thus impossible to compute in practice. The rates
given here are very similar to the rates derived for the entropy plug-in
estimator in~\cite{sricharan2013ensemble}. The differences are in
the constants in front of the rates, the dependence on the number
of samples from two distributions instead of one, and the $o\left(\frac{1}{k_{i}^{2}}\right)$
terms in the expression for the variance. The key to reducing mean
squared error (MSE) is that by applying Theorem~\ref{thm:ensemble},
the dependence of the MSE on $d$ will be greatly reduced.

\subsection{Weighted ensemble divergence estimator}

\label{sub:optimal_Est}Let $L>I=d-1$ and choose $\bar{l}=\left\{ l_{1},\dots,l_{L}\right\} $
to be positive real numbers. Assume that $M_{1}=O\left(M_{2}\right).$
Let $k(l)=l\sqrt{M_{2}}$, $\hat{\mathbf{G}}_{k(l)}:=\hat{\mathbf{G}}_{k(l),k(l)},$
and $\hat{\mathbf{G}}_{w}:=\sum_{l\in\bar{l}}w(l)\hat{\mathbf{G}}_{k(l)}.$
From Theorems~\ref{thm:bias} and~\ref{thm:variance}, the biases
of the ensemble estimators $\left\{ \hat{\mathbf{G}}_{k(l)}\right\} _{l\in\bar{l}}$
satisfy the condition $\mathcal{C}.1$ when $\psi_{i}(l)=l^{i/d}$
and $J=\{1,\dots,d-1\}$ since \[
\bias\left(\hat{\mathbf{G}}_{k(l)}\right)=\sum_{j=1}^{d-1}O\left(l^{j/d}M_{2}^{-\frac{j}{2d}}\right)+O\left(\frac{1}{\sqrt{M_{2}}}\right).\]
 The general form of the variance of $\hat{\mathbf{G}}_{k(l)}$ also
follows $\mathcal{C}.2$ since $N,\, M_{2}=\Theta(T)$ (see $\mathcal{A}.0$).
Thus we can find the optimal weight $w_{0}$ by using Theorem~\ref{thm:ensemble}
to obtain a plug-in $f$-divergence estimator with convergence rate
of $O\left(\frac{1}{T}\right).$

\subsection{Proofs of Theorems~\ref{thm:bias} and~\ref{thm:variance}}

\label{sec:Proofs}Like for the case of entropy estimation studied
in~\cite{sricharan2013ensemble}, the principal tools for the proofs
of Theorems~\ref{thm:bias} and~\ref{thm:variance} are concentration
inequalities and moment bounds applied to a higher order Taylor expansion
of the functional (\ref{eq:estimator}). However, as the functional
(\ref{eq:estimator}) depends on the ratio of densities, the analysis
is more complicated than that of~\cite{sricharan2013ensemble} since
we have to bound the covariances between products of $\ehat 1(\mathbf{Z})$
and $\ehat 2(\mathbf{Z})$ where $\mathbf{Z}$ is drawn from $f_{2}$,
$\ehat i(\mathbf{Z})=\fhat{i}(\mathbf{Z})-\mathbb{E}_{\mathbf{Z}}\fhat{i}(\mathbf{Z})$,
and $\ek(\mathbf{Z})=\lhat(\mathbf{Z})-\mathbb{E}_{\mathbf{Z}}\left(\lhat\right)$.
Using Lemmas~5, 8, and 9 in~\cite{sricharan2013ensemble}, modified
for application to $f_{1}$ and $f_{2}$, and two new Lemmas (Lemma~\ref{lem:ekhat}
and Lemma~\ref{lem:cov_ekhat} in the appendices) will establish
Theorem~\ref{thm:bias} and Theorem~\ref{thm:variance}. The modified
versions of Lemmas~5 and 8 from~\cite{sricharan2013ensemble} are
given in Lemma~\ref{lem:ekhat} and Lemma~\ref{lem:cov_ekhat},
respectively while the modified version of Lemma~9 from~\cite{sricharan2013ensemble}
is given as Lemma~\ref{lem:bounded}. The details are given in Appendix~\ref{sec:bias}
and Appendix~\ref{sec:variance}.

\section{Experiments\label{sec:Experiments}}

To demonstrate the accuracy of the theoretical predictions of the
performance of the ensemble method, we estimated the R\'{e}nyi $\alpha$-divergence
between two truncated normal densities with varying dimension and
sample size restricted to the unit cube. The densities have means
$\bar{\mu}_{1}=0.7*\bar{1}_{d}$, $\bar{\mu}_{2}=0.3*\bar{1}_{d}$
and covariance matrices $\sigma_{i}I_{d}$ where $\sigma_{1}=0.1,$
$\sigma_{2}=0.3,$ $\bar{1}_{d}$ is a $d$-dimensional vector of
ones, and $I_{d}$ is a $d$-dimensional identity matrix. We used
$\alpha=0.8$ and computed the estimates for the truncated kernel
density plug-in estimate, the $k$-nn plug-in estimate, and the optimally
weighted $k$-nn estimate. Since we have a finite number of samples,
we obtain $w_{0}$ by solving the second convex optimization problem
in~\cite{sricharan2013ensemble} which introduces a slack variable
on the bias constraint to better control the variance.

\begin{figure}
\centering

\includegraphics[width=0.25\textwidth]{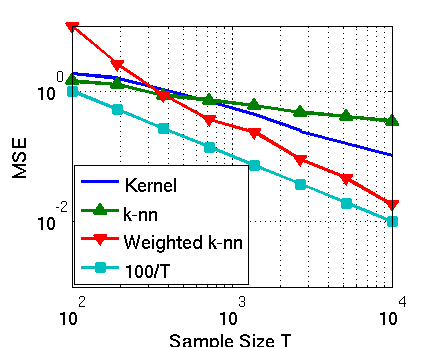}\includegraphics[width=0.25\textwidth]{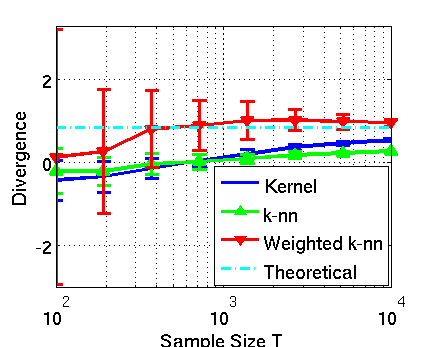}

\caption{(Left) Log-log plot of MSE of the truncated uniform kernel and $k$-nn
plug-in estimators ({}``Kernel'', {}``$k$-nn''), our proposed
weighted ensemble estimator, and the theoretical bound from Theorem~\ref{thm:ensemble}
scaled by a constant $(100/T)$. (Right) Average estimated divergence
for each estimator with error bars indicating the standard deviation.
Estimates for both plots are calculated from 100 trials for various
sample sizes with fixed $d=5$. The proposed estimator outperforms
the others for $T>400$ and is less biased. \label{fig:trunc_mse}}
\end{figure}

The left plot in Fig.~\ref{fig:trunc_mse} shows the MSE of all three
estimators for various sample sizes and fixed $d=5$. This experiment
shows that the optimally weighted $k$-nn estimate consistently outperforms
the others for sample sizes greater than $400$. The slope of the
MSE of the optimally weighted $k$-nn estimate also matches the slope
of the theoretical bound well.

The right plot in Fig.~\ref{fig:trunc_mse} shows the corresponding
average estimated divergence and standard deviation for the three
estimates. From the plot, the bias is consistently lowest for the
ensemble estimate while the variance is highest suggesting that bias
is decreased at the expense of increased variance. 

We repeated the experiment with a fixed sample size of $T=3000$ and
varying dimension. Based on the MSE, the ensemble estimate does better
than the other methods for $d\geq4$ and is comparable to the other
methods for $d<4$ (see Fig.~\ref{fig:trunc_increasingd}). Note
that MSE appears to increase slightly for all estimators as $d$ increases.
This is likely due to the dependence of the constants in the bias
and variance terms on the densities and because we are using a fixed
number of estimators $L$~\cite{sricharan2013ensemble}.

\begin{figure}
\centering

\includegraphics[width=0.35\textwidth]{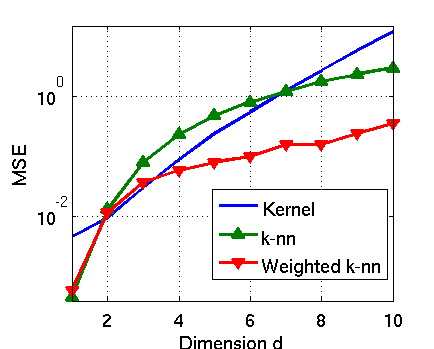}\caption{Plot of MSE of 100 trials of the estimators for various dimensions
at a fixed sample size $T=3000$. The proposed estimator outperforms
the others for $d\geq4$ and performs similarly for $d\leq3.$\label{fig:trunc_increasingd}}

\end{figure}

To test the limits of our theoretical results, we also ran the experiment
for non-truncated Gaussian random variables. Figure~\ref{fig:notrunc_mse_n}
shows the MSE as a function of sample size and dimension, respectively.
For fixed $d=5,$ the weighted ensemble estimate has the lowest MSE
for almost all sample sizes in the range considered. For fixed $T=3000,$
the MSE of the kernel plug-in method stays low for small dimension
but then rapidly increases as $d$ increases. For the weighted $k$-nn
method, the MSE increases at a slower rate as $d$ increases and is
lowest for $d\leq2$ and $d\geq5.$

\begin{figure}
\centering

\includegraphics[width=0.25\textwidth]{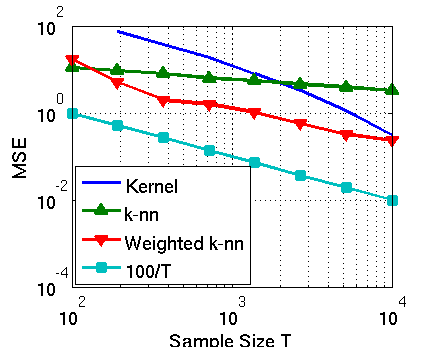}\includegraphics[width=0.25\textwidth]{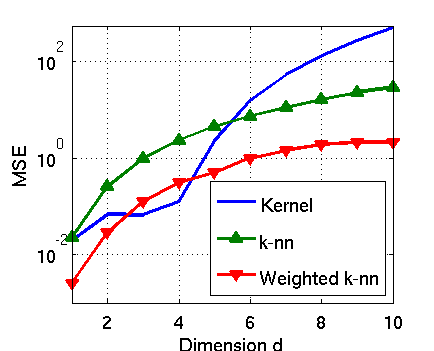}\caption{(Left) Log-log plot of MSE of the estimators for various sample sizes
with fixed $d=5$ for the non-truncated case and the theoretical bound
scaled by a constant $(100/T)$. (Right) Plot of MSE of the estimators
for various dimensions at a fixed sample size $T=3000$ for the non-truncated
case. 100 trials are used in both cases. The performance of the proposed
estimator is similar to that of the truncated case.\label{fig:notrunc_mse_n}}

\end{figure}

\section{Conclusion\label{sec:Conclusion}}

In this paper we derived convergence rates for a plug-in estimator
of $f$-divergence using $d$-dimensional truncated $k$-nn density
estimators. We then applied the theory of optimally weighted ensemble
estimation to obtain an estimator with a convergence rate of $O\left(\frac{1}{T}\right)$.
The advantages of this estimator is it is simple to implement, converges
rapidly, and performs well for higher dimensions. This weighted ensemble
divergence estimator also performs well for densities with unbounded
support.

\appendices

\section{Proof of Theorem~\ref{thm:bias}}

\label{sec:bias}Note that $\bias\left(\gk\right)=\mathbb{E}\left[g\left(\lhat(\mathbf{Z})\right)-g\left(\mathbb{E}_{\mathbf{Z}}\lhat(\mathbf{Z})\right)\right]+\mathbb{E}\left[g\left(\mathbb{E}_{\mathbf{Z}}\lhat(\mathbf{Z})\right)-g\left(L(\mathbf{Z})\right)\right].$
We find bounds for these terms by using Taylor series expansions.
The Taylor series expansion of $g\left(\lhat(\mathbf{Z})\right)$
around $\mathbb{E}_{\mathbf{Z}}\lhat(\mathbf{Z})$ gives\begin{equation}
g\left(\lhat(\mathbf{Z})\right)=\sum_{i=0}^{2}\frac{g^{(i)}\left(\mathbb{E}_{\mathbf{Z}}\lhat(\mathbf{Z})\right)}{i!}\ek^{i}(\mathbf{Z})+\frac{1}{6}g^{(3)}\left(\mathbf{\xi_{Z}}\right)\ek^{3}(\mathbf{Z})\label{eq:g_taylor}\end{equation}
where $\mathbf{\xi_{Z}}\in\left(\mathbb{E}_{\mathbf{Z}}\lhat(\mathbf{Z}),\lhat(\mathbf{Z})\right)$
comes from the mean value theorem. The following lemma enables us
to find bounds on the $\ek^{i}$ terms:
\begin{lem}
\label{lem:ekhat}Let $\gamma(z)$ be an arbitrary function with $\sup_{z}|\gamma(z)|<\infty.$
Let $\mathbf{Z}$ be a realization of the density $f_{2}$ independent
of $\fhat{i}$ for $i=1,2$. Then,\begin{eqnarray}
\mathbb{E}\left[\gamma(\mathbf{Z})\ehat i^{q}(\mathbf{Z})\right] & = & \begin{cases}
1_{\{q=2\}}\left(c_{2,i}(\gamma(z))\left(\frac{1}{k_{i}}\right)+o\left(\frac{1}{k_{i}}\right)\right)+1_{\{q\geq3\}}O\left(\frac{1}{k_{i}^{\frac{q}{2}}}\right), & q\geq2\\
0, & q=1,\end{cases}\label{eq:moment}\\
\mathbb{E}\left[\gamma(\mathbf{Z})\ehat 1^{q}(\mathbf{Z})\ehat 2^{r}(\mathbf{Z})\right] & = & \begin{cases}
O\left(\frac{1}{k_{1}^{\frac{q}{2}}k_{2}^{\frac{r}{2}}}\right), & q,\, r\geq2\\
0, & q=1\,\text{or }r=1\end{cases}\label{eq:cross_moment}\\
\mathbb{E}\left[\gamma(\mathbf{Z})\ek^{q}(\mathbf{Z})\right] & = & 1_{\{q=1\}}c_{4,1}\left(\frac{1}{k_{2}}\right)+1_{\{q=2\}}\left(c_{4,2}\left(\frac{1}{k_{2}}\right)+c_{4,3}\left(\frac{1}{k_{1}}\right)\right)+1_{\{q\geq3\}}O\left(\frac{1}{k_{1}^{\frac{q}{2}}}+\frac{1}{k_{2}^{\frac{q}{2}}}\right)\nonumber \\
 & = & 1_{\{q=1\}}c_{4,1}\left(\frac{1}{k_{2}}\right)+1_{\{q=2\}}\left(c_{4,2}\left(\frac{1}{k_{2}}\right)+c_{4,3}\left(\frac{1}{k_{1}}\right)\right)+1_{\{q\geq2\}}o\left(\frac{1}{k_{1}}\right)+o\left(\frac{1}{k_{2}}\right)\label{eq:ekhat}\end{eqnarray}

where $c_{2,i}$ and $c_{4,j}$ are functionals of $\gamma$, $f_{1},$
and $f_{2}.$ \end{lem}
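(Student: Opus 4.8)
The plan is to work conditionally on $\mathbf{Z}$ throughout and to reduce every moment to a moment of a Beta-distributed random variable, in the spirit of (and extending) the proof of Lemma~5 in \cite{sricharan2013ensemble}. Write $\bar{f}_i:=\mathbb{E}_{\mathbf{Z}}\fhat{i}(\mathbf{Z})$, so that $\ehat{i}(\mathbf{Z})=\fhat{i}(\mathbf{Z})-\bar{f}_i$ and $\mathbb{E}_{\mathbf{Z}}\ehat{i}(\mathbf{Z})=0$; the $q=1$ case of \eqref{eq:moment} is then immediate from the tower property (and the vanishing cases of \eqref{eq:cross_moment} will follow once conditional independence is established). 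For $q\ge2$ I would use the classical fact that, conditioned on $\mathbf{Z}$, the $f_i$-mass $\mathbf{U}_i:=\int_{B(\mathbf{Z},\rho_{i,k_i})}f_i(x)\,dx$ of the $k_i$-NN ball around $\mathbf{Z}$ is the $k_i$-th order statistic of $M_i$ i.i.d.\ uniforms, hence $\mathbf{U}_i\sim\mathrm{Beta}(k_i,M_i-k_i+1)$. A local Taylor expansion of $f_i$ about $\mathbf{Z}$, valid on the high-probability event that $\rho_{i,k_i}$ has its typical order $(k_i/M_i)^{1/d}$ --- legitimate by $(\mathcal{A}.1)$ and $(\mathcal{A}.2)$ --- gives $\fhat{i}(\mathbf{Z})=\tfrac{k_i f_i(\mathbf{Z})}{M_i\mathbf{U}_i}\bigl(1+O(\rho_{i,k_i}^{2})\bigr)$, so that $\ehat{i}(\mathbf{Z})=\tfrac{k_i f_i(\mathbf{Z})}{M_i}\bigl(\mathbf{U}_i^{-1}-\mathbb{E}\mathbf{U}_i^{-1}\bigr)$ up to lower-order corrections. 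Since $\mathbb{E}\mathbf{U}_i^{-m}=\prod_{j=1}^{m}\tfrac{M_i-j+1}{k_i-j}$ (finite once $m<k_i$, which holds eventually because $k_i\to\infty$ by $(\mathcal{A}.0)$), the central moments of $\mathbf{U}_i^{-1}$ are $O\bigl((M_i/k_i)^{q}k_i^{-\lceil q/2\rceil}\bigr)$; multiplying by the $(k_i/M_i)^{q}$ prefactor yields $\mathbb{E}_{\mathbf{Z}}\ehat{i}^{q}(\mathbf{Z})=O(k_i^{-q/2})$ uniformly in $\mathbf{Z}$, with $q=2$ leading behaviour $f_i(\mathbf{Z})^{2}/k_i+o(1/k_i)$. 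Taking the outer expectation over $\mathbf{Z}\sim f_2$ against the bounded $\gamma$ gives \eqref{eq:moment}, with $c_{2,i}$ a functional of $\gamma,f_1,f_2$.

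For the cross moments \eqref{eq:cross_moment}, the point is that $\fhat{1}$ is a function of $\{\mathbf{Y}_j\}_{j\le M_1}$ only and $\fhat{2}$ of $\{\mathbf{X}_{N+1},\dots,\mathbf{X}_{T}\}$ only; these two samples are mutually independent and independent of $\mathbf{Z}$, so conditioned on $\mathbf{Z}$ the errors $\ehat{1}(\mathbf{Z})$ and $\ehat{2}(\mathbf{Z})$ are independent. Hence $\mathbb{E}[\gamma(\mathbf{Z})\ehat{1}^{q}\ehat{2}^{r}]=\mathbb{E}\bigl[\gamma(\mathbf{Z})\,\mathbb{E}_{\mathbf{Z}}\ehat{1}^{q}\,\mathbb{E}_{\mathbf{Z}}\ehat{2}^{r}\bigr]$, which is $0$ if $q=1$ or $r=1$ and $O(k_1^{-q/2}k_2^{-r/2})$ for $q,r\ge2$ by the uniform bound above. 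For the ratio moments \eqref{eq:ekhat}, conditionally on $\mathbf{Z}$ write $\fhat{i}=\bar{f}_i+\ehat{i}$ and expand the geometric series
\[\lhat=\frac{\bar{f}_1+\ehat{1}}{\bar{f}_2+\ehat{2}}=\frac{\bar{f}_1+\ehat{1}}{\bar{f}_2}\sum_{m\ge0}\Bigl(-\frac{\ehat{2}}{\bar{f}_2}\Bigr)^{m}=\tilde{B}(\ehat{2})+A(\ehat{2})\,\ehat{1},\]
valid on $\{|\ehat{2}|<\bar{f}_2\}$, where $A$ is a power series in $\ehat{2}$ with $A(0)=\bar{f}_2^{-1}$ and $\tilde{B}(\ehat{2})=-\bar{f}_1\bar{f}_2^{-2}\ehat{2}+\bar{f}_1\bar{f}_2^{-3}\ehat{2}^{2}-\cdots$ records the deviation of the $\ehat{1}$-free part from its center. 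Truncating both series at order $\sim d$, raising $\ek$ to the $q$-th power, and applying $\mathbb{E}_{\mathbf{Z}}$ together with \eqref{eq:moment}--\eqref{eq:cross_moment}: every monomial linear in $\ehat{1}$ or in $\ehat{2}$ drops out, and the $\ehat{1}\ehat{2}$-type monomials drop by conditional independence. For $q=1$ the only surviving monomials are even powers of $\ehat{2}$ alone --- the leading one, $\propto\bar{f}_1\bar{f}_2^{-3}\mathbb{E}_{\mathbf{Z}}\ehat{2}^{2}$, is $O(1/k_2)$, and, crucially, \emph{no} $\ehat{1}^{2}$ monomial occurs because $\lhat$ is affine in $\ehat{1}$, which is exactly why the $q=1$ term of \eqref{eq:ekhat} carries $1/k_2$ but no $1/k_1$. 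For $q=2$ the $\ehat{1}^{2}$ and $\ehat{2}^{2}$ monomials both survive, producing the $1/k_1$ and $1/k_2$ terms, while all other monomials have total degree $\ge3$ and are $o(1/k_1+1/k_2)$; for $q\ge3$ every surviving monomial has total degree $\ge3$, hence is $O(k_1^{-q/2}+k_2^{-q/2})$. Replacing $\bar{f}_i$ by $f_i(\mathbf{Z})(1+o(1))$ and taking the outer expectation turns all coefficients into functionals of $\gamma,f_1,f_2$, giving \eqref{eq:ekhat}.

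The routine portions are the cross-moment step and the monomial bookkeeping in the ratio step. I expect the \textbf{main obstacle} to be the non-leading contributions: controlling the Taylor remainder in $\fhat{i}(\mathbf{Z})=\tfrac{k_i f_i(\mathbf{Z})}{M_i\mathbf{U}_i}\bigl(1+O(\rho_{i,k_i}^{2})\bigr)$, the geometric-series truncation remainder, and the contribution of the bad event $\{|\ehat{2}|\ge\bar{f}_2\}$ (equivalently, the $k_2$-NN ball having atypically small $f_2$-mass, i.e.\ $\rho_{2,k_2}$ atypically large), and showing that after raising to the $q$-th power and taking the bounded outer expectation all of these are $o(1/k_1+1/k_2)$. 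This rests on exponential-in-$k_i$ concentration of the ball mass --- a Chernoff/Bernstein bound on the Binomial count of sample points in a ball, of the $\mathcal{C}(k)$-type appearing in $(\mathcal{A}.5)$ --- which is where $k_i=k_0M_i^{\beta}\to\infty$ $(\mathcal{A}.0)$ and the strict density lower bound $(\mathcal{A}.1)$ enter, and one must verify these bounds hold uniformly over $\mathbf{Z}\in\mathcal{S}$, including near $\partial\mathcal{S}$ where the $k$-NN ball is clipped by the boundary.
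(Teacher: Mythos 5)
Your proposal is essentially correct and, for the two parts that are genuinely new in this paper --- the cross moments \eqref{eq:cross_moment} and the ratio moments \eqref{eq:ekhat} --- it takes the same route as the paper: conditional independence of $\ehat 1(\mathbf{Z})$ and $\ehat 2(\mathbf{Z})$ given $\mathbf{Z}$ reduces \eqref{eq:cross_moment} to \eqref{eq:moment}, and a geometric-series/Taylor expansion of $\lhat$ in $\ehat 2$ (affine in $\ehat 1$), followed by binomial bookkeeping of which monomials survive conditional expectation, yields \eqref{eq:ekhat}; your structural observations (no $1/k_1$ at $q=1$ because the expansion is affine in the conditionally centered $\ehat 1$; both $1/k_1$ and $1/k_2$ at $q=2$ from the pure quadratic monomials) are exactly the mechanism behind $c_{4,1},c_{4,2},c_{4,3}$. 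Where you differ is the base estimate \eqref{eq:moment}: the paper does not reprove it but cites Lemma~5 of \cite{sricharan2013ensemble}, which establishes the bound for a truncated uniform kernel estimator of bandwidth $(k/M)^{1/d}$ via concentration inequalities, and then invokes the equivalence of the $k$-nn estimator with that kernel estimator \cite{kumar2012thesis}; you instead compute directly with the $\mathrm{Beta}(k_i,M_i-k_i+1)$ law of the $k$-nn ball mass and the inverse moments $\mathbb{E}\mathbf{U}_i^{-m}$. Your route is more self-contained and makes the $1/k_i$ scaling transparent, at the cost of having to control the Taylor remainder, the boundary-clipped balls, and the atypical-radius event yourself --- precisely the technical burden the paper outsources to \cite{sricharan2013ensemble,kumar2012thesis}, and which your exponential-concentration plan (the $\mathcal{C}(k)$ events, as in Lemma~\ref{lem:bounded}) is the right tool for. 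One bookkeeping caveat: be explicit about the center of $\ek$. If you center at the exact conditional mean $\mathbb{E}_{\mathbf{Z}}\lhat$, the $q=1$ moment vanishes identically; the nonzero $c_{4,1}/k_2$ term in \eqref{eq:ekhat} arises because the expansion is organized around $\mathbb{E}_{\mathbf{Z}}\fhat{1}(\mathbf{Z})/\mathbb{E}_{\mathbf{Z}}\fhat{2}(\mathbf{Z})$, which differs from $\mathbb{E}_{\mathbf{Z}}\lhat$ by a deterministic $O(1/k_2)$ offset (the paper's $b_{1,k_2}$ term from Eq.~\eqref{eq:kde}) in addition to the $\bar{f}_1\bar{f}_2^{-3}\mathbb{E}_{\mathbf{Z}}\ehat 2^{2}(\mathbf{Z})$ contribution you identify; both must be tracked (they enter $c_{4,1}$ with opposite signs), though neither changes the stated $O(1/k_2)$ order.
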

\begin{proof}
For $i=2$, Eq.~\ref{eq:moment} is given and proved as Lemma 5 in~\cite{sricharan2013ensemble}
where the density estimator is a truncated uniform kernel density
estimator with bandwidth $\left(k/M\right)^{1/d}$. The proof uses
concentration inequalities to bound $\ez\ehat 2^{q}(\mathbf{Z})$
in terms of $k_{2}.$ It can then be shown that the $k$-nn density
estimator converges to a truncated uniform kernel density estimator~\cite{kumar2012thesis}.\textbf{\emph{
}}Thus the result holds for the $k$-nn density estimator as well.
For $i=1,$ the proof follows the same procedure but results in a
different constant.

For Eq.~\ref{eq:cross_moment}, note that for $q,\, r\geq2,$\begin{eqnarray*}
\mathbb{E}\left[\gamma(\mathbf{Z})\ehat 1^{q}(\mathbf{Z})\ehat 2^{r}(\mathbf{Z})\right] & = & \mathbb{E}\left[\gamma(\mathbf{Z})\mathbb{E}_{\mathbf{Z}}\left[\ehat 1^{q}(\mathbf{Z})\ehat 2^{r}(\mathbf{Z})\right]\right]\\
 & = & \mathbb{E}\left[\gamma(\mathbf{Z})\mathbb{E}_{\mathbf{Z}}\left[\ehat 1^{q}(\mathbf{Z})\right]\mathbb{E}_{\mathbf{Z}}\left[\ehat 2^{r}(\mathbf{Z})\right]\right]\\
 & = & \mathbb{E}\left[\gamma(\mathbf{Z})\left(O\left(\frac{1}{k_{1}^{\frac{q}{2}}k_{2}^{\frac{r}{2}}}\right)\right)\right]\\
 & = & O\left(\frac{1}{k_{1}^{\frac{q}{2}}k_{2}^{\frac{r}{2}}}\right),\end{eqnarray*}
 where we use conditional independence for the second equality and
Eq.~\ref{eq:moment} for the third equality. If either $q=0$ or
$r=0$ (but not both), then Eq.~\ref{eq:cross_moment} reduces to
Eq.~\ref{eq:moment}.

For Eq.~\ref{eq:ekhat}, we expand $\lhat(\mathbf{Z})$ around $\mathbb{E}_{\mathbf{Z}}\fhat{1}(\mathbf{Z})$
and $\mathbb{E}_{\mathbf{Z}}\fhat{2}(\mathbf{Z})$:\begin{eqnarray}
\frac{\fhat{1}(\mathbf{Z})}{\fhat{2}(\mathbf{Z})} & = & \frac{\mathbb{E}_{\mathbf{Z}}\fhat{1}(\mathbf{Z})}{\mathbb{E}_{\mathbf{Z}}\fhat{2}(\mathbf{Z})}+\frac{\ehat 1(\mathbf{Z})}{\mathbb{E}_{\mathbf{Z}}\fhat{2}(\mathbf{Z})}-\mathbb{E}_{\mathbf{Z}}\fhat{1}(\mathbf{Z})\frac{\ehat 2(\mathbf{Z})}{\left(\mathbb{E}_{\mathbf{Z}}\fhat{2}(\mathbf{Z})\right)^{2}}\nonumber \\
 &  & -\frac{\ehat 1(\mathbf{Z})\ehat 2(\mathbf{Z})}{\left(\mathbb{E}_{\mathbf{Z}}\fhat{2}(\mathbf{Z})\right)^{2}}+\mathbb{E}_{\mathbf{Z}}\fhat{1}(\mathbf{Z})\frac{\ehat 2^{2}(\mathbf{Z})}{2\left(\mathbb{E}_{\mathbf{Z}}\fhat{2}(\mathbf{Z})\right)^{3}}\nonumber \\
 &  & +\frac{\ehat 1(\mathbf{Z})\ehat 2^{2}(\mathbf{Z})}{2\left(\mathbb{E}_{\mathbf{Z}}\fhat{2}(\mathbf{Z})\right)^{3}}+o\left(\ehat 2^{2}(\mathbf{Z})+\ehat 1(\mathbf{Z})\ehat 2^{2}(\mathbf{Z})\right)\label{eq:lhat_taylor}\\
 & = & \frac{\mathbb{E}_{\mathbf{Z}}\fhat{1}(\mathbf{Z})}{\mathbb{E}_{\mathbf{Z}}\fhat{2}(\mathbf{Z})}+h(\ehat 1(\mathbf{Z}),\ehat 2(\mathbf{Z})).\nonumber \end{eqnarray}
 Let $\mathbf{h}(\mathbf{Z})=h(\ehat 1(\mathbf{Z}),\ehat 2(\mathbf{Z})).$
Thus $\ek(\mathbf{Z})=\frac{\mathbb{E}_{\mathbf{Z}}\fhat{1}(\mathbf{Z})}{\mathbb{E}_{\mathbf{Z}}\fhat{2}(\mathbf{Z})}-\mathbb{E}_{\mathbf{Z}}\lhat(\mathbf{Z})+\mathbf{h}(\mathbf{Z}).$
By the binomial theorem, \begin{equation}
\ek^{q}(\mathbf{Z})=\sum_{j=0}^{q}a_{q,j}\left(\frac{\mathbb{E}_{\mathbf{Z}}\fhat{1}(\mathbf{Z})}{\mathbb{E}_{\mathbf{Z}}\fhat{2}(\mathbf{Z})}-\mathbb{E}_{\mathbf{Z}}\lhat(\mathbf{Z})\right)^{q-j}\mathbf{h}^{j}(\mathbf{Z}),\label{eq:f_binomial}\end{equation}
where $a_{q,j}$ is the binomial coefficient. From~\cite{sricharan2013ensemble},
$\mathbb{E}_{\mathbf{Z}}\fhat{i}(\mathbf{Z})=f_{i}(\mathbf{Z})+\sum_{j=1}^{d}c_{i,j,k_{i}}(\mathbf{Z})\left(\frac{k_{i}}{M_{i}}\right)^{j/d}+o\left(\frac{k_{i}}{M_{i}}\right)=f_{i}(\mathbf{Z})+c_{1,i}(\mathbf{Z},k_{i},M_{i})=f_{i}(\mathbf{Z})+o(1).$
This quantity is bounded above and below based on our assumptions.
Using a Taylor series expansion of $\frac{1}{x}$ about $\mathbb{E}_{\mathbf{Z}}\fhat{2}(\mathbf{Z})$,
\begin{eqnarray}
\mathbb{E}_{\mathbf{Z}}\frac{1}{\fhat{2}(\mathbf{Z})} & = & \mathbb{E}_{\mathbf{Z}}\left[\frac{1}{\mathbb{E}_{\mathbf{Z}}\fhat{2}(\mathbf{Z})}-\frac{\ehat 2}{\left(\mathbb{E}_{\mathbf{Z}}\fhat{2}(\mathbf{Z})\right)^{2}}+\frac{\ehat 2^{2}}{2\xi_{2,\mathbf{Z}}}\right]\nonumber \\
 & = & \frac{1}{\mathbb{E}_{\mathbf{Z}}\fhat{2}(\mathbf{Z})}+\frac{\left(\var_{\mathbf{Z}}\left[\fhat{2}(\mathbf{Z})\right]\right)}{2\xi_{2,\mathbf{Z}}}\nonumber \\
 & = & \frac{1}{\mathbb{E}_{\mathbf{Z}}\fhat{2}(\mathbf{Z})}+c_{3,2}(\mathbf{Z})\left(\frac{1}{k_{2}}\right),\label{eq:kde}\end{eqnarray}
where $\xi_{2,\mathbf{Z}}\in\left(\mathbb{E}_{\mathbf{Z}}\fhat{2}(\mathbf{Z}),\fhat{2}(\mathbf{Z})\right)$
from the mean value thoerem and we use the fact that the variance
of the kernel density estimate converges to zero with rate $\frac{1}{M_{2}\sigma_{2}}$
where $\sigma_{2}=O\left(\frac{k_{2}}{M_{2}}\right)$. Thus \begin{eqnarray}
\left(\frac{\mathbb{E}_{\mathbf{Z}}\fhat{1}(\mathbf{Z})}{\mathbb{E}_{\mathbf{Z}}\fhat{2}(\mathbf{Z})}-\mathbb{E}_{\mathbf{Z}}\lhat(\mathbf{Z})\right)^{q} & = & \left(\mathbb{E}_{\mathbf{Z}}\fhat{1}(\mathbf{Z})c_{3,2}(\mathbf{Z})\left(\frac{1}{k_{2}}\right)\right)^{q}\nonumber \\
 & = & \left(f_{1}(\mathbf{Z})c_{3,2}(\mathbf{Z})\left(\frac{1}{k_{2}}\right)+\sum_{j=1}^{d}c_{1,j,k_{1}}\left(\frac{k_{1}}{M_{1}}\right)^{\frac{j}{d}}\left(\frac{1}{k_{2}}\right)+o\left(\frac{k_{1}}{M_{1}k_{2}}\right)\right)^{q}\nonumber \\
 & = & 1_{\{q=1\}}c_{3}(\mathbf{Z})\left(\frac{1}{k_{2}}\right)+1_{\{q\geq2\}}O\left(\frac{1}{k_{2}^{q}}\right)+o\left(\frac{1}{k_{2}^{q}}\right)=:\bz{q}(\mathbf{Z}).\label{eq:bias_final1}\end{eqnarray}
This is also bounded. Combining Eqs.~\ref{eq:f_binomial} and~\ref{eq:bias_final1}:
\begin{eqnarray}
\ek^{q}(\mathbf{Z}) & = & \bz{q}(\mathbf{Z})+\bz{q-1}^{1_{\{q\geq2\}}}(\mathbf{Z})a_{q,1}\mathbf{h}(\mathbf{Z})+1_{\{q\geq2\}}\bz{q-2}^{1_{\{q\geq3\}}}(\mathbf{Z})a_{q,2}\mathbf{h}^{2}(\mathbf{Z})+1_{\{q\geq3\}}\bz{q-3}^{1_{\{q\geq4\}}}(\mathbf{Z})O\left(\mathbf{h}^{3}(\mathbf{Z})\right)\nonumber \\
 & = & \bz{q}(\mathbf{Z})+\bz{q-1}^{1_{\{q\geq2\}}}(\mathbf{Z})a_{q,1}\times\nonumber \\
 &  & \left(\frac{\ehat 1(\mathbf{Z})}{\ez\fhat{2}(\mathbf{Z})}-\frac{\ez\fhat{1}(\mathbf{Z})}{\left(\ez\fhat{2}(\mathbf{Z})\right)^{2}}\ehat 2(\mathbf{Z})-\frac{\ehat 1(\mathbf{Z})\ehat 2(\mathbf{Z})}{\left(\ez\fhat{2}(\mathbf{Z})\right)^{2}}+\frac{\ez\fhat{1}(\mathbf{Z})}{2\left(\ez\fhat{2}(\mathbf{Z})\right)^{3}}\ehat 2^{2}(\mathbf{Z})+o\left(\ehat 2^{2}(\mathbf{Z})\right)\right)\nonumber \\
 &  & +1_{\{q\geq2\}}\bz{q-2}^{1_{\{q\geq3\}}}(\mathbf{Z})a_{q,2}\left(\frac{\ehat 1^{2}(\mathbf{Z})}{\left(\ez\fhat{2}(\mathbf{Z})\right)^{2}}+\frac{\left(\ez\fhat{1}(\mathbf{Z})\right)^{2}}{\left(\ez\fhat{2}(\mathbf{Z})\right)^{4}}\ehat 2^{2}(\mathbf{Z})+O\left(\ehat 1(\mathbf{Z})\ehat 2(\mathbf{Z})+\ehat 2^{3}(\mathbf{Z})\right)\right)\nonumber \\
 &  & +1_{\{q\geq3\}}\bz{q-2}^{1_{\{q\geq4\}}}(\mathbf{Z})\left(O\left(\ehat 1^{3}(\mathbf{Z})+\ehat 2^{3}(\mathbf{Z})+\ehat 1^{2}(\mathbf{Z})\ehat 2^{2}(\mathbf{Z})\right)\right)\nonumber \\
 & = & \bz{q}(\mathbf{Z})+\sum_{i=1}^{3}\mathbf{u}_{i,q}(\mathbf{Z}).\label{eq:Fk_q}\end{eqnarray}
Applying Eqs.~\ref{eq:moment} and~\ref{eq:cross_moment} we have\begin{eqnarray*}
\mathbb{E}\left[\gamma(\mathbf{Z})\ek^{q}(\mathbf{Z})\right] & = & \left(1_{\{q=1\}}\left(\mathbb{E}\left[\gamma(\mathbf{Z})c_{3}(\mathbf{Z})\right]+c_{2,2}\left(\frac{\gamma(z)\mathbb{E}_{Z}\fhat{1}(z)}{2\left(\mathbb{E}_{Z}\fhat{2}(z)\right)^{3}}\right)\right)+1_{\{q=2\}}c_{2,2}\left(\frac{\gamma(z)\left(\mathbb{E}_{Z}\fhat{1}(z)\right)^{2}}{\left(\mathbb{E}_{Z}\fhat{2}(z)\right)^{4}}\right)\right)\left(\frac{1}{k_{2}}\right)\\
 &  & +1_{\{q=2\}}c_{2,1}\left(\frac{\gamma(z)}{\left(\mathbb{E}_{Z}\fhat{2}(z)\right)^{2}}\right)\left(\frac{1}{k_{1}}\right)+1_{\{q\geq2\}}o\left(\frac{1}{k_{1}}\right)+o\left(\frac{1}{k_{2}}\right)\\
 & = & 1_{\{q=1\}}c_{4,1}\left(\frac{1}{k_{2}}\right)+1_{\{q=2\}}\left(c_{4,2}\left(\frac{1}{k_{2}}\right)+c_{4,3}\left(\frac{1}{k_{1}}\right)\right)+1_{\{q\geq2\}}o\left(\frac{1}{k_{1}}\right)+o\left(\frac{1}{k_{2}}\right).\end{eqnarray*}
Then since $\mathbb{E}_{Z}\fhat{i}(Z)=f_{i}(Z)+o(1)$, the constants
depend on $f_{1},$ $f_{2}$, and $\gamma.$ 

To obtain the more general bound for $\mathbb{E}\left[\gamma(\mathbf{Z})\ek^{q}(\mathbf{Z})\right]$,
note that from Eqs.~\ref{eq:moment}, \ref{eq:cross_moment}, and~\ref{eq:lhat_taylor},
the leading terms $\ez\mathbf{h}^{q}(\mathbf{Z})$ with $q<4$ are
\[
\ez\mathbf{h}^{q}(\mathbf{Z})=\begin{cases}
O\left(\frac{1}{k_{2}}\right), & q=1,\\
O\left(\frac{1}{k_{1}^{\frac{q}{2}}}+\frac{1}{k_{2}^{\frac{q}{2}}}+\frac{1}{k_{1}k_{2}}\right), & q=2,\,3.\end{cases}\]
Note that $O\left(\frac{1}{k_{1}k_{2}}\right)=O\left(\frac{1}{\left(\min\left(k_{1},k_{2}\right)\right)^{2}}\right)$.
Thus for $q=2,\,3,$ we ignore this term to get $\ez\mathbf{h}^{q}(\mathbf{Z})=O\left(\frac{1}{k_{1}^{q/2}}+\frac{1}{k_{2}^{q/2}}\right).$
For $q\geq4$, the leading terms come from products of powers of $\ehat 1$
and $\ehat 2.$ This gives \begin{eqnarray*}
\ez\mathbf{h}^{q}(\mathbf{Z}) & = & O\left(\frac{1}{k_{1}^{\frac{q}{2}}}+\frac{1}{k_{2}^{\frac{q}{2}}}+\sum_{\substack{i+j=q\\
i,j\geq2}
}\frac{1}{k_{1}^{\frac{i}{2}}k_{2}^{\frac{j}{2}}}\right)\\
 & = & O\left(\frac{1}{k_{1}^{\frac{q}{2}}}+\frac{1}{k_{2}^{\frac{q}{2}}}+\sum_{\substack{i+j=q\\
i,j\geq2}
}\frac{1}{\left(\min\left(k_{1},k_{2}\right)\right)^{\frac{i+j}{2}}}\right)\\
 & = & O\left(\frac{1}{k_{1}^{\frac{q}{2}}}+\frac{1}{k_{2}^{\frac{q}{2}}}\right)\\
\implies\mathbb{E}\left[\gamma(\mathbf{Z})\ek^{q}(\mathbf{Z})\right] & = & \sum_{j=0}^{q}O\left(\frac{1}{k_{2}^{q-j}}\right)\mathbb{E}\left[\ez\mathbf{h}^{j}(\mathbf{Z})\right]\\
 & = & 1_{\{q=1\}}O\left(\frac{1}{k_{2}}\right)+1_{\{q\geq2\}}O\left(\frac{1}{k_{1}^{\frac{q}{2}}}+\frac{1}{k_{2}^{\frac{q}{2}}}\right).\end{eqnarray*}
 
\end{proof}
The following lemma is required to bound the $g^{(3)}\left(\mathbf{\xi_{Z}}\right)$
term. 
\begin{lem}
\label{lem:bounded}Assume that U(x) is any arbitrary functional which
satisfies\begin{eqnarray*}
(i) & \mathbb{E}\left[\sup_{L\in(p_{l},p_{u})}\left|U\left(L\frac{\mathbf{p}_{2}}{\mathbf{p}_{1}}\right)\right|\right]=G_{1}<\infty,\\
(ii) & \sup_{L\in\left(\frac{q_{l,1}}{q_{u,2}},\frac{q_{u,1}}{q_{l,2}}\right)}\left|U\left(L\right)\right|\mathcal{C}\left(k_{1}\right)\mathcal{C}\left(k_{2}\right)=G_{2}<\infty,\\
(iii) & \mathbb{E}\left[\sup_{L\in\left(\frac{q_{l,1}}{p_{u,2}},\frac{q_{u,1}}{p_{l,2}}\right)}\left|U\left(L\mathbf{p}_{2}\right)\right|\mathcal{C}\left(k_{1}\right)\right]=G_{3}<\infty,\\
(iv) & \mathbb{E}\left[\sup_{L\in\left(\frac{p_{l,1}}{q_{u,2}},\frac{p_{u,1}}{q_{l,2}}\right)}\left|U\left(\frac{L}{\mathbf{p}_{1}}\right)\right|\mathcal{C}\left(k_{2}\right)\right]=G_{4}<\infty.\end{eqnarray*}
Let $\mathbf{Z}$ be $\mathbf{X}_{i}$ for some fixed $i\in\{1,\dots,N\}$
and $\xi_{\mathbf{Z}}$ be any random variable which almost surely
lies in $(L(\mathbf{Z}),\lhat(\mathbf{Z})).$ Then $\mathbb{E}|U(\xi_{\mathbf{Z}})|<\infty.$ \end{lem}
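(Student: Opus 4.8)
The plan is to condition on $\mathbf{Z}=\mathbf{X}_{i}$, which is independent of both density estimates, and to bound $\mathbb{E}_{\mathbf{Z}}\left|U(\xi_{\mathbf{Z}})\right|$ by a constant not depending on $\mathbf{Z}$; taking the outer expectation over $\mathbf{Z}\sim f_{2}$ then finishes. Since $\xi_{\mathbf{Z}}$ lies almost surely between $L(\mathbf{Z})$ and $\hat{\mathbf{L}}_{k_{1},k_{2}}(\mathbf{Z})$, we have $\left|U(\xi_{\mathbf{Z}})\right|\le\sup\{|U(t)|:\,t\text{ between }L(\mathbf{Z})\text{ and }\hat{\mathbf{L}}_{k_{1},k_{2}}(\mathbf{Z})\}$, so it suffices to exhibit an interval containing both $L(\mathbf{Z})$ and $\hat{\mathbf{L}}_{k_{1},k_{2}}(\mathbf{Z})$ whose endpoints match one of the four configurations in $(\mathcal{A}.5)$. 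By $(\mathcal{A}.1)$ the true ratio satisfies $L(\mathbf{Z})\in[\epsilon_{0}/\epsilon_{\infty},\epsilon_{\infty}/\epsilon_{0}]$. For the estimate I would use the standard representation behind Lemma~9 of~\cite{sricharan2013ensemble}: conditionally on $\mathbf{Z}$, the coverage $\mathbf{P}_{i}$ of the $k_{i}$-nn ball $B(\mathbf{Z},\rho_{i,k_{i}}(\mathbf{Z}))$ under $f_{i}$ is $\mathrm{Beta}(k_{i},M_{i}-k_{i}+1)$, and $\mathbf{P}_{1},\mathbf{P}_{2}$ are conditionally independent since they are built from the disjoint samples $\{\mathbf{Y}_{j}\}$ and $\{\mathbf{X}_{N+j}\}$; the $k$-nn estimator is moreover comparable to a uniform-kernel estimator of bandwidth $(k_{i}/M_{i})^{1/d}$~\cite{kumar2012thesis}, so the concentration inequalities of~\cite{sricharan2013ensemble} transfer.

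Next I would partition the probability space (given $\mathbf{Z}$) according to whether each ball $B(\mathbf{Z},\rho_{i,k_{i}}(\mathbf{Z}))$ lies in the support $\mathcal{S}$, giving four events $\mathcal{E}_{00},\mathcal{E}_{10},\mathcal{E}_{01},\mathcal{E}_{11}$ (the first index records whether $B_{1}\subseteq\mathcal{S}$). When $B_{i}\subseteq\mathcal{S}$, $(\mathcal{A}.1)$ gives the two-sided bound $\epsilon_{0}\bar{c}\rho_{i,k_{i}}^{d}\le\mathbf{P}_{i}\le\epsilon_{\infty}\bar{c}\rho_{i,k_{i}}^{d}$, hence $\hat{\mathbf{f}}_{i,k_{i}}(\mathbf{Z})=k_{i}/(M_{i}\bar{c}\rho_{i,k_{i}}^{d})\in[\epsilon_{0},\epsilon_{\infty}]\cdot k_{i}/(M_{i}\mathbf{P}_{i})$, which is exactly the normalization defining $p_{l,i},p_{u,i}$ once the $(1\pm\epsilon)$ and $k_{i}-1$ factors produced by concentrating $\mathbf{P}_{i}$ near $k_{i}/M_{i}$ are absorbed. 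When instead $B_{i}\not\subseteq\mathcal{S}$, I would use $\rho_{i,k_{i}}\le D$ for the crude lower bound $\hat{\mathbf{f}}_{i,k_{i}}(\mathbf{Z})\ge k_{i}/(M_{i}\bar{c}D^{d})=q_{l,i}$, and the concentration of $\hat{\mathbf{f}}_{i,k_{i}}(\mathbf{Z})$ about $\mathbb{E}_{\mathbf{Z}}\hat{\mathbf{f}}_{i,k_{i}}(\mathbf{Z})=f_{i}(\mathbf{Z})+o(1)\le\epsilon_{\infty}(1+o(1))$ to get $\hat{\mathbf{f}}_{i,k_{i}}(\mathbf{Z})\le(1+\epsilon)\epsilon_{\infty}=q_{u,i}$ off an event of probability at most $\mathcal{C}(k_{i})$; moreover $B_{i}\not\subseteq\mathcal{S}$ forces $\mathbf{P}_{i}\ge\epsilon_{0}\bar{c}\,d(\mathbf{Z},\partial\mathcal{S})^{d}$, which by the Beta concentration has probability $O(\mathcal{C}(k_{i}))$ (the thin shell near $\partial\mathcal{S}$, of small $f_{2}$-mass, handled separately).

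Assembling the pieces: on $\mathcal{E}_{00}$ the ratio $\hat{\mathbf{L}}_{k_{1},k_{2}}(\mathbf{Z})$, hence $\xi_{\mathbf{Z}}$ and $L(\mathbf{Z})$, lies in $\{L'\mathbf{P}_{2}/\mathbf{P}_{1}:\,L'\in(p_{l},p_{u})\}$, so $\mathbb{E}_{\mathbf{Z}}\!\left[|U(\xi_{\mathbf{Z}})|\mathbf{1}_{\mathcal{E}_{00}}\right]\le\mathbb{E}\!\left[\sup_{L\in(p_{l},p_{u})}\bigl|U(L\mathbf{P}_{2}/\mathbf{P}_{1})\bigr|\right]=G_{1}$ by $(\mathcal{A}.5)(i)$. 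On $\mathcal{E}_{10}$, combining $\hat{\mathbf{f}}_{1,k_{1}}\in(q_{l,1},q_{u,1})$ (off probability $\mathcal{C}(k_{1})$) with $\hat{\mathbf{f}}_{2,k_{2}}\in[\epsilon_{0},\epsilon_{\infty}]k_{2}/(M_{2}\mathbf{P}_{2})$ puts $\hat{\mathbf{L}}_{k_{1},k_{2}}(\mathbf{Z})$ in a range of the form $(q_{l,1}/p_{u,2},q_{u,1}/p_{l,2})\cdot\mathbf{P}_{2}$; since $B_{1}\not\subseteq\mathcal{S}$ has conditional probability $\le\mathcal{C}(k_{1})$ and is conditionally independent of $\mathbf{P}_{2}$, the contribution is at most $\mathbb{E}\!\left[\sup_{L}\bigl|U(L\mathbf{P}_{2})\bigr|\,\mathcal{C}(k_{1})\right]=G_{3}$ by $(iii)$; by symmetry $\mathcal{E}_{01}$ contributes $\le G_{4}$ via $(iv)$, and $\mathcal{E}_{11}$ contributes $\le\sup_{L}|U(L)|\,\mathcal{C}(k_{1})\mathcal{C}(k_{2})=G_{2}$ via $(ii)$, using conditional independence of the two boundary events. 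Summing the four and taking $\mathbb{E}_{\mathbf{Z}\sim f_{2}}$ gives $\mathbb{E}|U(\xi_{\mathbf{Z}})|\le G_{1}+G_{2}+G_{3}+G_{4}<\infty$.

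The main obstacle is the bookkeeping of the boundary/rare-event cases: there is no deterministic upper bound on $\hat{\mathbf{f}}_{i,k_{i}}(\mathbf{Z})$, so one must carefully split off $\{\hat{\mathbf{f}}_{i,k_{i}}(\mathbf{Z})>q_{u,i}\}$, show it costs only a factor $\mathcal{C}(k_{i})$, and verify that on its complement the interval trapping $\xi_{\mathbf{Z}}$ is exactly one of the four intervals in $(\mathcal{A}.5)$. The conditional-independence factorizations producing the $\mathcal{C}(k_{i})$ and $\mathcal{C}(k_{1})\mathcal{C}(k_{2})$ multipliers, together with the matching of the $(1\pm\epsilon)$, $\epsilon_{0},\epsilon_{\infty}$, and $k_{i}/M_{i}$ constants to $p_{l,i},p_{u,i},q_{l,i},q_{u,i}$, are where the care lies; the remaining estimates are direct applications of the concentration and moment bounds already established for the entropy estimator in~\cite{sricharan2013ensemble}.
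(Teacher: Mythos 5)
Your overall architecture is the same as the paper's---trap $\xi_{\mathbf{Z}}$ between $L(\mathbf{Z})$ and $\lhat(\mathbf{Z})$, split the probability space into four events whose associated ranges of $\lhat(\mathbf{Z})$ match the four suprema in $(\mathcal{A}.5)$, bound each piece by the corresponding $G_{j}$, and sum---but the events you partition on are not the ones that make this work, and this creates a genuine gap. The paper partitions on the coverage-concentration events $A_{i}(Z)=\bigl\{\mathbf{P}_{i}(Z)<\bigl(1+\sqrt{6}\,k_{i}^{-\delta/2}\bigr)\tfrac{k_{i}-1}{M_{i}}\bigr\}$, for which the cited results give simultaneously (a) $\Pr\bigl(A_{i}(Z)^{C}\bigr)=\Theta\bigl(\mathcal{C}(k_{i})\bigr)$ and (b) deterministic sandwich bounds $p_{l,i}/\mathbf{P}_{i}<\hat{\mathbf{f}}_{i,k_{i}}(Z)<p_{u,i}/\mathbf{P}_{i}$ under $A_{i}(Z)$ and $q_{l,i}<\hat{\mathbf{f}}_{i,k_{i}}(Z)<q_{u,i}$ under $A_{i}(Z)^{C}$. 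Your geometric events $\{B_{i}\subseteq\mathcal{S}\}$ satisfy neither property. First, $\Pr\bigl(B_{i}\not\subseteq\mathcal{S}\bigr)$ is \emph{not} $O\bigl(\mathcal{C}(k_{i})\bigr)$: for $\mathbf{Z}$ within distance $(k_{i}/M_{i})^{1/d}$ of $\partial\mathcal{S}$ the ball exits the support with probability near one, and that shell has $f_{2}$-mass of order $(k_{i}/M_{i})^{1/d}$, which is polynomially small, not exponentially small like $\mathcal{C}(k_{i})=\exp(-3k_{i}^{1-\delta})$. Since assumptions $(ii)$--$(iv)$ only guarantee finiteness of the suprema \emph{after} multiplication by $\mathcal{C}(k_{i})$, a bad event of larger probability cannot be absorbed by them, and the resulting bound would not be uniform in $M_{i}$ (which is needed downstream, where the lemma is used to assert $\mathbb{E}[\Psi^{2}]=O(1)$). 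Your parenthetical ``thin shell handled separately'' is exactly the part that cannot be closed from the stated hypotheses.

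Second, on $\{B_{i}\not\subseteq\mathcal{S}\}$ you have no deterministic upper bound on $\hat{\mathbf{f}}_{i,k_{i}}(\mathbf{Z})$ (the ball can exit the support and still be tiny), so you invoke yet another concentration event to get $\hat{\mathbf{f}}_{i,k_{i}}\leq q_{u,i}$, leaving a residual event $\{B_{i}\not\subseteq\mathcal{S}\}\cap\{\hat{\mathbf{f}}_{i,k_{i}}>q_{u,i}\}$ on which $\lhat(\mathbf{Z})$ lands in none of the four ranges of $(\mathcal{A}.5)$; this piece is never bounded. The fix is to abandon the geometric partition and condition directly on the $A_{i}(Z)$ events (equivalently, on whether the Beta variable $\mathbf{P}_{i}(Z)$ concentrates), importing the two sandwich bounds from the references as the paper does; then $Q_{1}=A_{1}^{C}\cap A_{2}^{C}$, $Q_{2}=A_{1}^{C}\cap A_{2}$, $Q_{3}=A_{1}\cap A_{2}^{C}$ have probabilities bounded by $\mathcal{C}(k_{1})\mathcal{C}(k_{2})$, $\mathcal{C}(k_{1})$, $\mathcal{C}(k_{2})$ respectively (using independence of $A_{1}$ and $A_{2}$), and each of $(i)$--$(iv)$ applies verbatim. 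Your handling of the good event and the final summation $G_{1}+G_{2}+G_{3}+G_{4}$ would then go through as written.
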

\begin{proof}
This is a version of Lemma~9 in~\cite{sricharan2013ensemble} modified
to apply to functionals of the likelihood ratio. Because of assumption
$\mathcal{A}.1$, it is sufficient to show that the conditional expectation
$\mathbb{E}\left[|U(\xi_{Z})|\,|\,\mathbf{X}_{1},\dots,\mathbf{X}_{N}\right]<\infty.$ 

First, some properties of $k$-NN density estimators are required.
Let $\mathbf{S}_{k_{i},i}(Z)=\left\{ Y:d(Z,Y)\leq\mathbf{d}_{Z,i}^{(k_{i})}\right\} $
where $\mathbf{d}_{Z,i}^{(k_{i})}$ is the distance to the $k_{i}$th
nearest neighbor of $Z$ from the corresponding set of samples. Then
let $\mathbf{P}_{i}(Z)=\int_{\mathbf{S}_{k_{i},i}(Z)}f_{i}(x)dx$
which has a beta distribution with parameters $k_{i}$ and $M_{i}-k_{i}+1$~\cite{mack1979knn}.
Let $A_{i}(Z)$ be the event that $\mathbf{P}_{i}(Z)<\left(\frac{\sqrt{6}}{k_{i}^{\delta/2}}+1\right)\frac{k_{i}-1}{M_{i}}.$
It has been shown that $Pr\left(A_{i}(Z)^{C}\right)=\Theta\left(\mathcal{C}\left(k_{i}\right)\right)$
and that under $A_{i}(Z)$~\cite{kumar2012thesis,sricharan2012estimation},
\[
\frac{p_{l,i}}{\mathbf{P}_{i}(Z)}<\fhat{i}(Z)<\frac{p_{u,i}}{\mathbf{P}_{i}(Z)}.\]
It has also been shown that under $A_{i}(Z)^{C}$~\cite{kumar2012thesis,sricharan2012estimation},
\[
q_{l,i}<\fhat{i}(Z)<q_{u,i}.\]
Let $A(Z)=A_{1}(Z)\cap A_{2}(Z)$ and note that $A_{1}(Z)$ and $A_{2}(Z)$
are independent events. Thus since $\lhat(Z)=\frac{\fhat{1}(Z)}{\fhat{2}(Z)},$
we have that under $A(Z),$ \[
p_{l}\frac{\mathbf{P}_{2}(Z)}{\mathbf{P}_{1}(Z)}<\lhat(Z)<p_{u}\frac{\mathbf{P}_{2}(Z)}{\mathbf{P}_{1}(Z)}.\]
Now let $Q_{1}(Z)=A_{1}(Z)^{C}\cap A_{2}(Z)^{C},$ $Q_{2}(Z)=A_{1}(Z)^{C}\cap A_{2}(Z),$
and $Q_{3}(Z)=A_{1}(Z)\cap A_{2}(Z)^{C}.$ Then due to independence
and the fact that the $Q_{i}(Z)$s are disjoint, \begin{eqnarray*}
A(Z)^{C} & = & A_{1}(Z)^{C}\cup A_{2}(Z)^{C}=Q_{1}(Z)\cup Q_{2}(Z)\cup Q_{3}(Z),\\
\implies Pr\left(A(Z)^{C}\right) & = & Pr\left(Q_{1}(Z)\right)+Pr\left(Q_{2}(Z)\right)+Pr\left(Q_{3}(Z)\right)\\
 & \leq & \mathcal{C}\left(k_{1}\right)\mathcal{C}\left(k_{2}\right)+\mathcal{C}\left(k_{1}\right)+\mathcal{C}\left(k_{2}\right).\end{eqnarray*}
Then under $Q_{1}(Z)$, $Q_{2}(Z),$ and $Q_{3}(Z)$, respectively,\[
\frac{q_{l,1}}{q_{u,2}}<\lhat(Z)<\frac{q_{u,1}}{q_{l,2}},\]
\[
\frac{q_{l,1}\mathbf{P}_{2}(Z)}{p_{u,2}}<\lhat(Z)<\frac{q_{u,1}\mathbf{P}_{2}(Z)}{p_{l,2}},\]
\[
\frac{p_{l,1}}{\mathbf{P}_{1}(Z)q_{u,2}}<\lhat(Z)<\frac{p_{u,1}}{\mathbf{P}_{1}(Z)q_{l,2}}.\]
Conditioning on $\mathbf{X}_{1},\dots,\mathbf{X}_{N}$ gives \begin{eqnarray*}
\mathbb{E}\left[\left|U(\xi_{Z})\right|\right] & = & \mathbb{E}\left[1_{A(Z)}\left|U(\xi_{Z})\right|\right]+\mathbb{E}\left[1_{Q_{1}(Z)}\left|U(\xi_{Z})\right|\right]+\mathbb{E}\left[1_{Q_{2}(Z)}\left|U(\xi_{Z})\right|\right]+\mathbb{E}\left[1_{Q_{3}(Z)}\left|U(\xi_{Z})\right|\right]\\
 & \leq & Pr(A(Z))\mathbb{E}\left[\sup_{L\in(p_{l},p_{u})}\left|U\left(L\frac{\mathbf{P}_{2}(Z)}{\mathbf{P}_{1}(Z)}\right)\right|\right]+Pr(Q_{1}(Z))\sup_{L\in\left(\frac{q_{l,1}}{q_{u,2}},\frac{q_{u,1}}{q_{l,2}}\right)}\left|U\left(L\right)\right|\\
 &  & +Pr(Q_{1}(Z))\mathbb{E}\left[\sup_{L\in\left(\frac{q_{l,1}}{p_{u,2}},\frac{q_{u,1}}{p_{l,2}}\right)}\left|U\left(L\mathbf{P}_{2}(Z)\right)\right|\right]+Pr(Q_{1}(Z))\mathbb{E}\left[\sup_{L\in\left(\frac{p_{l,1}}{q_{u,2}},\frac{p_{u,1}}{q_{l,2}}\right)}\left|U\left(\frac{L}{\mathbf{P}_{1}(Z)}\right)\right|\right]\\
 & \leq & \mathbb{E}\left[\sup_{L\in(p_{l},p_{u})}\left|U\left(L\frac{\mathbf{P}_{2}(Z)}{\mathbf{P}_{1}(Z)}\right)\right|\right]+\sup_{L\in\left(\frac{q_{l,1}}{q_{u,2}},\frac{q_{u,1}}{q_{l,2}}\right)}\left|U\left(L\right)\right|\mathcal{C}\left(k_{1}\right)\mathcal{C}\left(k_{2}\right)\\
 &  & +\mathbb{E}\left[\sup_{L\in\left(\frac{q_{l,1}}{p_{u,2}},\frac{q_{u,1}}{p_{l,2}}\right)}\left|U\left(L\mathbf{P}_{2}(Z)\right)\right|\mathcal{C}\left(k_{1}\right)\right]+\mathbb{E}\left[\sup_{L\in\left(\frac{p_{l,1}}{q_{u,2}},\frac{p_{u,1}}{q_{l,2}}\right)}\left|U\left(\frac{L}{\mathbf{P}_{1}(Z)}\right)\right|\mathcal{C}\left(k_{2}\right)\right]\\
 & = & G_{1}+G_{2}+G_{3}+G_{4}<\infty.\end{eqnarray*}

\end{proof}
Applying Lemma~\ref{lem:bounded} and assumption $(\mathcal{A}.5)$
gives $\mathbb{E}\left[\left(g^{(3)}\left(\xi_{\mathbf{Z}}\right)/6\right)^{2}\right]=O(1).$
Then by Cauchy-Schwarz and applying Lemma~\ref{lem:ekhat},\[
\mathbb{E}\left[\frac{1}{6}g^{(3)}\left(\mathbf{\xi_{Z}}\right)\ek^{3}(\mathbf{Z})\right]\leq\sqrt{\mathbb{E}\left[\left(\frac{g^{(3)}\left(\xi_{\mathbf{Z}}\right)}{6}\right)^{2}\right]\mathbb{E}\left[\ek^{6}(\mathbf{Z})\right]}=o\left(\frac{1}{k_{1}}+\frac{1}{k_{2}}\right).\]
Using this result with Eq.~\ref{eq:g_taylor} and applying Lemma~\ref{lem:ekhat}
again gives \begin{equation}
\mathbb{E}\left[g\left(\lhat(\mathbf{Z})\right)-g\left(\ez\lhat(\mathbf{Z})\right)\right]=\left(c_{4,1}+c_{4,2}\right)\left(\frac{1}{k_{2}}\right)+c_{4,3}\left(\frac{1}{k_{1}}\right)+o\left(\frac{1}{k_{1}}+\frac{1}{k_{2}}\right).\label{eq:bias1}\end{equation}

Now by Taylor series expansion \[
g\left(\ez\lhat(\mathbf{Z})\right)=g\left(L(\mathbf{Z})\right)+\sum_{i=1}^{d}g^{(i)}\left(L(\mathbf{Z})\right)\left(\ez\lhat(\mathbf{Z})-L(\mathbf{Z})\right)^{i}+o\left(\left(\ez\lhat(\mathbf{Z})-L(\mathbf{Z})\right)^{d}\right).\]
From Eq.~\ref{eq:kde}, \begin{eqnarray*}
\ez\lhat(\mathbf{Z})-L(\mathbf{Z}) & = & \ez\fhat{1}(\mathbf{Z})\left(\frac{1}{f_{2}(\mathbf{Z})+c_{1,2}(\mathbf{Z},k_{2},M_{2})}+c_{3,2}(\mathbf{Z})\left(\frac{1}{k_{2}}\right)\right)-L(\mathbf{Z})\\
 & = & \frac{f_{2}(\mathbf{Z})c_{1,1}(\mathbf{Z},k_{1},M_{1})-f_{1}(\mathbf{Z})c_{1,2}(\mathbf{Z},k_{1},M_{2})}{f_{2}(\mathbf{Z})\left(f_{2}(\mathbf{Z})+c_{1,2}(\mathbf{Z},k_{2},M_{2})\right)}+\ez\fhat{1}(\mathbf{Z})c_{3,2}(\mathbf{Z})\left(\frac{1}{k_{2}}\right)\\
 & = & \frac{c_{1,1}(\mathbf{Z},k_{1},M_{1})}{f_{2}(\mathbf{Z})+o(1)}-\frac{f_{1}(\mathbf{Z})c_{1,2}(\mathbf{Z},k_{2},M_{2})}{f_{2}(\mathbf{Z})\left(f_{2}(\mathbf{Z})+o(1)\right)}+\ez\fhat{1}(\mathbf{Z})c_{3,2}(\mathbf{Z})\left(\frac{1}{k_{2}}\right)\\
 & = & \sum_{j=1}^{d}\left(c_{5,j,1}(\mathbf{Z})\left(\frac{k_{1}}{M_{1}}\right)^{\frac{j}{d}}+c_{5,j,2}(\mathbf{Z})\left(\frac{k_{2}}{M_{2}}\right)^{\frac{j}{d}}\right)+f_{1}(\mathbf{Z})c_{3,2}(\mathbf{Z})\left(\frac{1}{k_{2}}\right)+o\left(\frac{k_{1}}{M_{1}}+\frac{k_{2}}{M_{2}}+\frac{1}{k_{2}}\right).\end{eqnarray*}
This gives \begin{equation}
\mathbb{E}\left[g\left(\mathbb{E}_{\mathbf{Z}}\hat{\mathbf{L}}_{k}(\mathbf{Z})\right)-g\left(L(\mathbf{Z})\right)\right]=\sum_{j=1}^{d}\left(c_{6,j,1}\left(\frac{k_{1}}{M_{1}}\right)^{\frac{j}{d}}+c_{6,j,2}\left(\frac{k_{2}}{M_{2}}\right)^{\frac{j}{d}}\right)+c_{6,3}\left(\frac{1}{k_{2}}\right)+o\left(\frac{k_{1}}{M_{1}}+\frac{k_{2}}{M_{2}}+\frac{1}{k_{2}}\right),\label{eq:bias_final2}\end{equation}
where $c_{6,3}=\mathbb{E}\left[g'\left(L(\mathbf{Z})\right)f_{1}(\mathbf{Z})c_{3,2}(\mathbf{Z})\right]$
and $c_{6,j,i}$ is a functional of $g$, the derivatives of $g$,
and the densities $f_{1}$ and $f_{2}$.

Combining Eqs.~\ref{eq:bias1} and~\ref{eq:bias_final2} completes
the proof.

\section{Proof of Theorem~\ref{thm:variance}}

\label{sec:variance}Again, we start by forming a Taylor series expansion
of $g\left(\lhat(\mathbf{Z})\right)$ around $\mathbb{E}_{\mathbf{Z}}\lhat(\mathbf{Z})$.
\[
g\left(\lhat(\mathbf{Z})\right)=\sum_{i=0}^{\lambda-1}\frac{g^{(i)}\left(\mathbb{E}_{\mathbf{Z}}\lhat(\mathbf{Z})\right)}{i!}\ek^{i}(\mathbf{Z})+\frac{g^{(\lambda)}\left(\mathbf{\xi_{Z}}\right)}{\lambda!}\ek^{\lambda}(\mathbf{Z}),\]
 where $\mathbf{\xi_{Z}}\in\left(\mathbb{E}_{\mathbf{Z}}\lhat(\mathbf{Z}),\lhat(\mathbf{Z})\right)$.
Let $\Psi(\mathbf{Z})=g^{(\lambda)}\left(\mathbf{\xi_{Z}}\right)/\lambda!$
and define the operator $\mathcal{M}(\mathbf{Z})=\mathbf{Z}-\mathbb{E}\mathbf{Z}.$
Let \begin{eqnarray*}
\mathbf{p}_{i} & = & \mathcal{M}\left(g\left(\mathbb{E}_{\mathbf{X}_{i}}\lhat\left(\mathbf{X}_{i}\right)\right)\right),\\
\mathbf{q}_{i} & = & \mathcal{M}\left(g'\left(\mathbb{E}_{\mathbf{X}_{i}}\lhat\left(\mathbf{X}_{i}\right)\right)\ek\left(\mathbf{X}_{i}\right)\right),\\
\mathbf{r}_{i} & = & \mathcal{M}\left(\sum_{j=2}^{\lambda-1}\frac{g^{(j)}\left(\mathbb{E}_{\mathbf{X}_{i}}\lhat\left(\mathbf{X}_{i}\right)\right)}{j!}\ek^{j}\left(\mathbf{X}_{i}\right)\right)\\
\mathbf{s}_{i} & = & \mathcal{M}\left(\Psi\left(\mathbf{X}_{i}\right)\ek^{\lambda}\left(\mathbf{X}_{i}\right)\right).\end{eqnarray*}
 Then the variance of $\gk$ is \begin{eqnarray*}
\var\left(\gk\right) & = & \mathbb{E}\left[\left(\gk-\mathbb{E}\gk\right)^{2}\right]\\
 & = & \frac{1}{N}\mathbb{E}\left[\left(\mathbf{p}_{1}+\mathbf{q}_{1}+\mathbf{r}_{1}+\mathbf{s}_{1}\right)^{2}\right]+\frac{N-1}{N}\mathbb{E}\left[\left(\mathbf{p}_{1}+\mathbf{q}_{1}+\mathbf{r}_{1}+\mathbf{s}_{1}\right)\left(\mathbf{p}_{2}+\mathbf{q}_{2}+\mathbf{r}_{2}+\mathbf{s}_{2}\right)\right].\end{eqnarray*}
We will bound this using Lemma~\ref{lem:ekhat} and the following
lemmas. 
\begin{lem}
\label{lem:cov_split}Let $\Psi_{i}=\left\{ \{X,Y\}:\left\Vert X-Y\right\Vert _{1}\geq2\left(\frac{k_{i}}{M_{i}}\right)^{\frac{1}{d}}\right\} .$
For a fixed pair of points $\{X,Y\}\in\Psi_{i}$, and positive integers
$q,$ $r,$ \[
Cov\left[\ehat i^{q}(X),\ehat i^{r}(Y)\right]=1_{\{q=r=1\}}\left(\frac{-f_{i}(X)f_{i}(Y)}{M_{i}}\right)+o\left(\frac{1}{M_{i}}\right).\]
 For a fixed pair of points $\{X,Y\}\in\Psi_{i}^{C},$\[
Cov\left[\ehat i^{q}(X),\ehat i^{r}(Y)\right]=1_{\{q=r=1\}}O\left(\frac{1}{k_{i}}\right)+o\left(\frac{1}{k_{i}}\right).\]

\end{lem}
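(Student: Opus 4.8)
The plan is to follow the proof of Lemma~\ref{lem:ekhat} (i.e.\ of Lemma~5 in~\cite{sricharan2013ensemble}). As there, we may replace the adaptive $k$-nn kernel by a truncated uniform kernel density estimator with a fixed bandwidth $h_i$ chosen so that $\bar c h_i^d=k_i/M_i$, since the $k$-nn estimator converges to this kernel estimator~\cite{kumar2012thesis} and the covariances computed below agree up to lower-order terms. Writing $K_{h_i}(Z,U)=\kappa_i\mathbf 1\{\|Z-U\|\le h_i\}$ with $\kappa_i=(\bar c h_i^d)^{-1}=\Theta(M_i/k_i)$ (so $K_{h_i}^2=\kappa_iK_{h_i}$) and letting $\mathbf U_{i,1},\dots,\mathbf U_{i,M_i}$ be the i.i.d.\ $f_i$-samples used to form $\fhat i$, we have $\fhat i(Z)=\frac1{M_i}\sum_mK_{h_i}(Z,\mathbf U_{i,m})$ and
\[
\ehat i(Z)=\frac1{M_i}\sum_{m=1}^{M_i}\mathbf W_{i,m}(Z),\qquad \mathbf W_{i,m}(Z)=K_{h_i}(Z,\mathbf U_{i,m})-\mathbb E\,K_{h_i}(Z,\mathbf U_{i,1}),
\]
where for each fixed base point the $\mathbf W_{i,m}$ are i.i.d.\ and mean-zero across $m$, and $\mathbb E\,K_{h_i}(Z,\mathbf U_{i,1})=f_i(Z)+o(1)$ is bounded away from $0$ and $\infty$ by $\mathcal A.1$. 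Expanding powers by the multinomial theorem,
\[
\mathbb E\!\left[\ehat i^q(X)\ehat i^r(Y)\right]=\frac1{M_i^{q+r}}\sum_{\vec j\in[M_i]^q}\sum_{\vec l\in[M_i]^r}\mathbb E\!\left[\prod_{a=1}^q\mathbf W_{i,j_a}(X)\prod_{b=1}^r\mathbf W_{i,l_b}(Y)\right];
\]
since $\mathbf W$'s with distinct indices are independent and centred, only tuples in which every index value occurs at least twice in total contribute, and grouping by this coincidence pattern writes the expectation as a product, over distinct indices $m$, of factors $\mathbb E[\mathbf W_{i,m}(X)^{a_m}\mathbf W_{i,m}(Y)^{b_m}]$, where $a_m,b_m$ are the multiplicities of $m$ in $\vec j,\vec l$.

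For $\{X,Y\}\in\Psi_i$, the condition $\|X-Y\|_1\ge2(k_i/M_i)^{1/d}$ makes the supports of $K_{h_i}(X,\cdot)$ and $K_{h_i}(Y,\cdot)$ disjoint, so $K_{h_i}(X,\mathbf U_{i,m})K_{h_i}(Y,\mathbf U_{i,m})=0$ a.s.; together with $K_{h_i}^2=\kappa_iK_{h_i}$ this evaluates each factor in closed form, with leading order $\Theta(\kappa_i^{a_m-1})$ if $m$ is used only on the $X$-side (then $a_m\ge2$), $\Theta(\kappa_i^{b_m-1})$ only on the $Y$-side, and $\Theta(\kappa_i^{\max(a_m,b_m)-1})$ if $m$ is shared ($a_m,b_m\ge1$; read $\kappa_i^0=1$ when $a_m=b_m=1$). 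A pattern with $P_J$ indices used only in $\vec j$, $P_L$ only in $\vec l$, and $P_{JL}$ shared has $\Theta(M_i^{P_J+P_L+P_{JL}})$ tuples, and with $\kappa_i=\Theta(M_i/k_i)$ one finds it contributes at order $M_i^{-P_{JL}}k_i^{-(q+r-P_J-P_L-2P_{JL})}$. The $P_{JL}=0$ patterns are precisely the tuples with $\vec j,\vec l$ disjoint; summed, their contribution equals $\mathbb E[\ehat i^q(X)]\,\mathbb E[\ehat i^r(Y)]$ up to $o(1/M_i)$ and so cancels in the covariance. Since $M_i\gg k_i$, among the remaining patterns ($P_{JL}\ge1$) the largest contribution has $P_{JL}=1$. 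When $q=r=1$ the unique such pattern is $\vec j=\vec l=(m)$, with $\mathbb E[\mathbf W_{i,m}(X)\mathbf W_{i,m}(Y)]=-\big(\mathbb E K_{h_i}(X,\mathbf U_{i,1})\big)\big(\mathbb E K_{h_i}(Y,\mathbf U_{i,1})\big)=-f_i(X)f_i(Y)+o(1)$ and $\Theta(M_i)$ choices of $m$, giving $Cov[\ehat i(X),\ehat i(Y)]=-f_i(X)f_i(Y)/M_i+o(1/M_i)$; when $q+r\ge3$, enumerating the integer values of $(P_J,P_L)$ attainable when $P_{JL}=1$ shows $q+r-P_J-P_L-2\ge1$ in every case, so each surviving term is $O(1/(M_ik_i))=o(1/M_i)$, and the covariance vanishes to that order.

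For $\{X,Y\}\in\Psi_i^C$ the kernels may overlap and the cancellation above is unavailable, but a crude bound suffices. By Cauchy--Schwarz and the pointwise-in-the-base-point moment bounds established in the proof of Lemma~\ref{lem:ekhat} (cf.\ (\ref{eq:moment})),
\[
\left|Cov\!\left[\ehat i^q(X),\ehat i^r(Y)\right]\right|\le\sqrt{\mathbb E\!\left[\ehat i^{2q}(X)\right]\mathbb E\!\left[\ehat i^{2r}(Y)\right]}=\sqrt{O(k_i^{-q})\,O(k_i^{-r})}=O\!\left(k_i^{-(q+r)/2}\right),
\]
which is $O(1/k_i)$ for $q=r=1$ and $o(1/k_i)$ whenever $q+r\ge3$, as claimed.

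The main obstacle is the combinatorial bookkeeping in the far-apart case: organising the multinomial sum by coincidence pattern, evaluating each factor $\mathbb E[\mathbf W_{i,m}(X)^{a_m}\mathbf W_{i,m}(Y)^{b_m}]$ to leading order via the disjoint-support identity, checking that the $P_{JL}=0$ contribution reproduces $\mathbb E[\ehat i^q(X)]\mathbb E[\ehat i^r(Y)]$ up to $o(1/M_i)$, and confirming that among all $P_{JL}\ge1$ patterns only the $q=r=1$ pattern reaches order $1/M_i$ while the integrality of $(P_J,P_L)$ forces every other pattern down to $O(1/(M_ik_i))$. A secondary point needing care --- exactly as in~\cite{sricharan2013ensemble,kumar2012thesis} --- is justifying the replacement of the adaptive $k$-nn kernel by the fixed-bandwidth uniform kernel and the uniformity in $Z$ of $\mathbb E\,K_{h_i}(Z,\mathbf U_{i,1})=f_i(Z)+o(1)$.
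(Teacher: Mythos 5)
Your proposal is correct, but it is worth saying up front that the paper does not actually prove this lemma: it simply states that the two claims are Lemmas~6 and~7 of~\cite{sricharan2013ensemble} for the truncated uniform kernel estimator and that the result transfers to the $k$-nn estimator. Your argument is therefore a self-contained reconstruction rather than a match to anything in the text. The route you take is sound: writing $\ehat i$ as a centred i.i.d.\ sum, expanding $\mathbb E[\ehat i^q(X)\ehat i^r(Y)]$ by coincidence patterns, using disjointness of the kernel supports on $\Psi_i$ together with $K_{h_i}^2=\kappa_iK_{h_i}$ to evaluate each factor, and checking via the exponent bookkeeping $q+r-P_J-P_L-2P_{JL}\ge P_J+P_L$ that only the $q=r=1$, $P_{JL}=1$ pattern reaches order $1/M_i$ (with value $-\mu_X\mu_Y/M_i=-f_i(X)f_i(Y)/M_i+o(1/M_i)$); the Cauchy--Schwarz bound $O(k_i^{-(q+r)/2})$ on $\Psi_i^C$ is exactly what the stated $O(1/k_i)$ and $o(1/k_i)$ require. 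Two points deserve one more line each. First, your claim that the $P_{JL}=0$ part cancels against $\mathbb E[\ehat i^q(X)]\,\mathbb E[\ehat i^r(Y)]$ up to $o(1/M_i)$ needs the observation that the overlapping-index portion of that product involves at most $\lfloor q/2\rfloor+\lfloor r/2\rfloor-1$ free indices and a $k_i$-exponent of at least $2$, hence is $O(M_i^{-1}k_i^{-2})=o(1/M_i)$; without this the cancellation is only asserted. Second, your proof (like the paper's citation) leans on the unproved reduction from the adaptive $k$-nn kernel to a fixed-bandwidth uniform kernel, and on the convention that the $\ell_1$ threshold $2(k_i/M_i)^{1/d}$ in the definition of $\Psi_i$ genuinely forces disjoint kernel supports (this depends on the shape and exact bandwidth, which involve $\bar c$); these are constant-factor conventions inherited from~\cite{sricharan2013ensemble} and do not affect the asymptotic orders, but they are the places where your write-up, like the paper's, is not fully self-contained.
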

This lemma is given and proved as Lemmas~6 and 7 in~\cite{sricharan2013ensemble}
for the truncated uniform kernel density estimator using concentration
inequalities and Eq.~\ref{eq:moment}. Thus the result holds for
the $k$-nn density estimator as well.
\begin{lem}
\label{lem:cov_ekhat}Let $\gamma_{1}(x),$ $\gamma_{2}(x)$ be arbitrary
functions with $1$ partial derivative wrt $x$ and $\sup_{x}|\gamma_{i}(x)|<\infty,\, i=1,\,2.$
Let $\mathbf{X},$ $\mathbf{Y}$ be realizations of the density $f_{2}$
independent of the realizations used for $\fhat{1}$ and $\fhat{2}$.
Let $E_{0}=\{s,q,t,r\geq1\}$, $E_{1,1}=\{s=0,q\geq2,t\geq1,r\geq1\}\cup\{s\geq1,q\geq1,t=0,r\geq2\}$,
and $E_{1,2}=\{s\geq2,q=0,t\geq1,r\geq1\}\cup\{s\geq1,q\geq1,t\geq2,r=0\}$.
Then\begin{eqnarray}
Cov\left[\gamma_{1}(\mathbf{X})\ehat i^{q}(\mathbf{X}),\gamma_{2}(\mathbf{Y})\ehat i^{r}(\mathbf{Y})\right] & = & 1_{\{q=r=1\}}c_{7,i}(\gamma_{1}(x),\gamma_{2}(x))\left(\frac{1}{M_{i}}\right)+o\left(\frac{1}{M_{i}}\right),\label{eq:cov1}\\
Cov\left[\gamma_{1}(\mathbf{X})\ehat 1^{s}(\mathbf{X})\ehat 2^{q}(\mathbf{X}),\gamma_{2}(\mathbf{Y})\ehat 1^{t}(\mathbf{Y})\ehat 2^{r}(\mathbf{Y})\right] & = & \begin{cases}
o\left(\frac{1}{M_{1}}+\frac{1}{M_{2}}+\frac{1}{k_{1}^{2}}+\frac{1}{k_{2}^{2}}\right), & E_{0}\\
o\left(\frac{1}{\max\left(M_{1},M_{2}\right)}+\frac{1}{M_{2}}+\frac{1}{k_{1}^{2}}+\frac{1}{k_{2}^{2}}\right), & E_{1,1}\\
o\left(\frac{1}{\max\left(M_{1},M_{2}\right)}+\frac{1}{M_{1}}+\frac{1}{k_{1}^{2}}+\frac{1}{k_{2}^{2}}\right), & E_{1,2}\\
0, & \text{otherwise}\end{cases},\label{eq:cov2}\\
Cov\left[\gamma_{1}(\mathbf{X})\ek^{q}(\mathbf{X}),\gamma_{2}(\mathbf{Y})\ek^{r}(\mathbf{Y})\right] & = & 1_{\{q=1,r=1\}}\left(c_{8,1}\left(\gamma_{1}(x),\gamma_{2}(x)\right)\left(\frac{1}{M_{1}}\right)+c_{8,2}\left(\gamma_{1}(x),\gamma_{2}(x)\right)\left(\frac{1}{M_{2}}\right)\right)\nonumber \\
 &  & +o\left(\frac{1}{M_{1}}+\frac{1}{M_{2}}+\frac{1}{k_{1}^{2}}+\frac{1}{k_{2}^{2}}\right).\label{eq:cov3}\end{eqnarray}
 \end{lem}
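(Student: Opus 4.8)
The plan is to derive all three identities from the single-density covariance estimates of Lemma~\ref{lem:cov_split} together with the moment bounds of Lemma~\ref{lem:ekhat}, following the argument of Lemma~8 of~\cite{sricharan2013ensemble}. Two independences organize everything: (i) $\mathbf{X}$ and $\mathbf{Y}$ are independent of each other and of every sample used to form $\fhat{1}$ and $\fhat{2}$; and (ii) conditionally on $(\mathbf{X},\mathbf{Y})$ the errors $\ehat1$ (built only from $\{\mathbf{Y}_j\}$) and $\ehat2$ (built only from $\{\mathbf{X}_{N+j}\}$) are independent. For each statement I write the covariance as $\mathbb{E}[AB]-\mathbb{E}[A]\mathbb{E}[B]$ and condition on $(\mathbf{X},\mathbf{Y})$: by (ii) the conditional expectation of a mixed product of powers of $\ehat1$ and $\ehat2$ factors into a product of two single-density conditional moments, each of which I further split as (product of its marginal conditional moments at $\mathbf{X}$ and at $\mathbf{Y}$) plus (its conditional covariance); and by (i) the fully factored ``product of marginal moments'' piece of $\mathbb{E}[AB]$ cancels exactly against $\mathbb{E}[A]\mathbb{E}[B]$. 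What survives is always an outer $(\mathbf{X},\mathbf{Y})$-average of $\gamma_1(\mathbf{X})\gamma_2(\mathbf{Y})$ against terms containing at least one conditional covariance, which is precisely what Lemma~\ref{lem:cov_split} controls.

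For Eq.~\ref{eq:cov1} only index $i$ appears, so the covariance equals $\mathbb{E}\bigl[\gamma_1(\mathbf{X})\gamma_2(\mathbf{Y})\,\mathrm{Cov}_{\mathrm{data}}\bigl(\ehat i^q(\mathbf{X}),\ehat i^r(\mathbf{Y})\bigr)\bigr]$, and I would reproduce the argument of Lemma~8 of~\cite{sricharan2013ensemble}: partition the range of $(\mathbf{X},\mathbf{Y})$ into the far set $\Psi_i$ and the near set $\Psi_i^C$. On $\Psi_i$ the conditional covariance is $1_{\{q=r=1\}}(-f_i(\mathbf{X})f_i(\mathbf{Y})/M_i)+o(1/M_i)$; on $\Psi_i^C$ it is $1_{\{q=r=1\}}O(1/k_i)+o(1/k_i)$, but $\Pr(\{\mathbf{X},\mathbf{Y}\}\in\Psi_i^C)=O(k_i/M_i)$, so the near-diagonal region also contributes at rate $O(1/M_i)$. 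The far-diagonal integral factors (by (i)) into $-\mathbb{E}[\gamma_1 f_i]\mathbb{E}[\gamma_2 f_i]/M_i$ up to an $o(1/M_i)$ boundary correction, and the near-diagonal integral is handled using the assumed derivative bound on $\gamma_2$ to replace $\gamma_2(\mathbf{Y})$ by $\gamma_2(\mathbf{X})$ up to $o(1/M_i)$; together these give the stated form with $c_{7,i}$ a functional of $\gamma_1,\gamma_2,f_1,f_2$ that differs from the entropy case only because the index-$i$ estimate uses the $M_i$ samples from $f_i$.

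For Eq.~\ref{eq:cov2}, after the factorization and cancellation above the surviving terms are outer averages of $\gamma_1(\mathbf{X})\gamma_2(\mathbf{Y})$ against a product of two factors, each being either a conditional covariance $\mathrm{Cov}_{\mathrm{data}}(\ehat i^a(\mathbf{X}),\ehat i^b(\mathbf{Y}))$ --- which, integrated over $\Psi_i\cup\Psi_i^C$, equals $1_{\{a=b=1\}}O(1/M_i)+o(1/M_i)$, hence $o(1/M_i)$ unless $a=b=1$ --- or a marginal conditional moment $\mathbb{E}_{\mathbf{Z}}\ehat i^a(\mathbf{Z})$, which by Eq.~\ref{eq:moment} is $0$ for $a=1$, $O(1/k_i)$ for $a=2$, and smaller for $a\geq3$. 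A finite case check over the patterns then finishes it: in $E_0$ at least one of the two factors fails the pure $a=b=1$ form, so it is $o(1/M_i)$ or $O(1/k_i^2)$ and, multiplied by the other factor, lands in $o(1/M_1+1/M_2+1/k_1^2+1/k_2^2)$; in $E_{1,1}$ and $E_{1,2}$ the absent power, together with a vanishing first-order conditional mean, either makes the corresponding factor vanish or forces it to be $O(1/k_i)$ with no $1/M_i$ piece, so the $1/M_i$ term for that index drops and one obtains the sharper $1/\max(M_1,M_2)$ in the bound; and in every remaining pattern the covariance either vanishes (one factor is a function of $\mathbf{X}$ alone) or, again by the displayed rates, falls into $o(1/M_1+1/M_2+1/k_1^2+1/k_2^2)$. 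Carrying out this bookkeeping --- in particular keeping, for each of the two factors, the $\Psi_i$-covariance, the $\Psi_i^C$ near-diagonal, and the $\mathbb{E}_{\mathbf{Z}}\ehat i^a$ contributions separate --- is the main obstacle.

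For Eq.~\ref{eq:cov3} I would substitute $\ek^q(\mathbf{Z})=\bz{q}(\mathbf{Z})+\sum_{i=1}^{3}\mathbf{u}_{i,q}(\mathbf{Z})$ from the proof of Lemma~\ref{lem:ekhat} (Eq.~\ref{eq:Fk_q}), where $\bz{q}(\mathbf{Z})$ is a bounded deterministic function of $\mathbf{Z}$ alone and each $\mathbf{u}_{i,q}(\mathbf{Z})$ is a bounded function of $\mathbf{Z}$ times a linear combination of products of powers of $\ehat1(\mathbf{Z})$ and $\ehat2(\mathbf{Z})$. Because $\mathbf{X}$ is independent of $(\mathbf{Y},\mathrm{data})$, every covariance involving a pure $\bz{q}(\mathbf{X})$ or $\bz{r}(\mathbf{Y})$ factor vanishes, so $\mathrm{Cov}\bigl(\gamma_1(\mathbf{X})\ek^q(\mathbf{X}),\gamma_2(\mathbf{Y})\ek^r(\mathbf{Y})\bigr)$ reduces to a finite sum of covariances of the forms already handled in Eqs.~\ref{eq:cov1} and~\ref{eq:cov2}. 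For $q=r=1$ the leading contribution comes from the linear part of $\mathbf{h}$, namely $\alpha_1(\mathbf{Z})\ehat1(\mathbf{Z})+\alpha_2(\mathbf{Z})\ehat2(\mathbf{Z})$ with $\alpha_1,\alpha_2$ the bounded coefficients of $\ehat1(\mathbf{Z})$ and $\ehat2(\mathbf{Z})$ in Eq.~\ref{eq:lhat_taylor}: the two diagonal pieces give $c_{8,1}(\gamma_1,\gamma_2)(1/M_1)+c_{8,2}(\gamma_1,\gamma_2)(1/M_2)$ via Eq.~\ref{eq:cov1}, while the $\ehat1$--$\ehat2$ cross piece vanishes since $\mathbb{E}_{\mathbf{X}}\ehat1(\mathbf{X})=\mathbb{E}_{\mathbf{Y}}\ehat2(\mathbf{Y})=0$. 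All higher-order pieces of $\mathbf{h}$, and all terms with $q\geq2$ or $r\geq2$ (where the extra $\bz{q-j}$ coefficients carry further powers of $1/k_2$), fall under Eq.~\ref{eq:cov2} and hence into the remainder $o(1/M_1+1/M_2+1/k_1^2+1/k_2^2)$, which completes the proof.
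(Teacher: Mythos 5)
Your proposal is correct and takes essentially the same route as the paper: Eq.~(\ref{eq:cov1}) via the $\Psi_i/\Psi_i^C$ splitting of Lemma~\ref{lem:cov_split} (the paper simply cites Lemma~8 of~\cite{sricharan2013ensemble} here), Eq.~(\ref{eq:cov2}) by conditioning on $(\mathbf{X},\mathbf{Y})$, factoring through the conditional independence of $\ehat 1$ and $\ehat 2$, cancelling the product-of-marginals term, and doing the four-region ($\Psi_1^{C}\cap\Psi_2^{C}$, etc.) bookkeeping, and Eq.~(\ref{eq:cov3}) by the $\bz{q}+\mathbf{h}$ decomposition with the $1/M_1$ and $1/M_2$ leading terms extracted from the linear part of $\mathbf{h}$ and the $\ehat 1$--$\ehat 2$ cross term vanishing. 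The only slip is the claim that in $E_0$ at least one of the two factors fails the pure first-power form --- this is false for $s=q=t=r=1$ --- but that case is still covered by your own framework, since the surviving term there is the product of the two conditional covariances, which is $O\left(\frac{1}{M_1M_2}\right)$ on the far-far region and similarly negligible on the other three regions.
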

\begin{proof}
Eq.~\ref{eq:cov1} is given and proved as Lemma 8 in~\cite{sricharan2013ensemble}
using results given in Lemma~\ref{lem:cov_split}. For Eq.~\ref{eq:cov2},
we have by Eqs.~\ref{eq:moment} and~\ref{eq:cross_moment} and
conditional independence when $E_{0},$ $E_{1,1}$, or $E_{1,2}$
hold:\begin{align*}
Cov\left[\gamma_{1}(X)\ehat 1^{s}(X)\ehat 2^{q}(X),\gamma_{2}(Y)\ehat 1^{t}(Y)\ehat 2^{r}(Y)\right]\end{align*}

\begin{eqnarray}
 & = & \mathbb{E}\left[\gamma_{1}(X)\ehat 1^{s}(X)\ehat 2^{q}(X)\gamma_{2}(Y)\ehat 1^{t}(Y)\ehat 2^{r}(Y)\right]\nonumber \\
 &  & -\mathbb{E}\left[\gamma_{1}(X)\ehat 1^{s}(X)\ehat 2^{q}(X)\right]\mathbb{E}\left[\gamma_{2}(Y)\ehat 1^{t}(Y)\ehat 2^{r}(Y)\right]\nonumber \\
 & = & \gamma_{1}(X)\gamma_{2}(Y)\mathbb{E}\left[\ehat 1^{s}(X)\ehat 1^{t}(Y)\right]\mathbb{E}\left[\ehat 2^{q}(X)\ehat 2^{r}(Y)\right]+1_{E_{0}\cap\{q,r,s,t\geq2\}}o\left(\frac{1}{\min\left(k_{1},k_{2}\right)^{2}}\right)\nonumber \\
 &  & +1_{E_{1,1}\cap\{\{t,r\geq2,s=0\}\cup\{s,q\geq2,t=0\}\}}o\left(\frac{1}{k_{2}^{2}}\right)+1_{E_{1,2}\cap\{\{t,r\geq2,q=0\}\cup\{s,q\geq2,r=0\}\}}o\left(\frac{1}{k_{1}^{2}}\right).\label{eq:cov_inside1}\end{eqnarray}
Note that $\mathbb{E}\left[\ehat i^{s}(X)\ehat i^{t}(Y)\right]=Cov\left[\ehat i^{s}(X),\ehat i^{t}(Y)\right]+\mathbb{E}\left[\ehat i^{s}(X)\right]\mathbb{E}\left[\ehat i^{t}(Y)\right].$
Consider the case where $E_{0}$ holds. By Eq.~\ref{eq:moment} and
Lemma~\ref{lem:cov_split}, this gives \begin{equation}
\mathbb{E}\left[\ehat i^{s}(X)\ehat i^{t}(Y)\right]=1_{\{s=t=2\}}O\left(\frac{1}{k_{i}^{2}}\right)+o\left(\frac{1}{k_{i}^{2}}\right)+\begin{cases}
1_{\left\{ s=t=1\right\} }\left(\frac{-f_{i}(X)f_{i}(Y)}{M_{i}}\right)+o\left(\frac{1}{M_{i}}\right), & \{X,Y\}\in\Psi_{i}\\
1_{\{s=t=1\}}O\left(\frac{1}{k_{i}}\right)+o\left(\frac{1}{k_{i}}\right), & \{X,Y\}\in\Psi_{i}^{c}.\end{cases}\label{eq:cov_inside2}\end{equation}
Note that \[
\mathbb{E}\left[Cov_{\mathbf{X,Y}}\left[\gamma_{1}(\mathbf{X})\ehat 1^{s}(\mathbf{X})\ehat 2^{q}(\mathbf{X}),\gamma_{2}(\mathbf{Y})\ehat 1^{t}(\mathbf{Y})\ehat 2^{r}(\mathbf{Y})\right]\right]=I_{1}+I_{2}+I_{3}+I_{4},\]
where \[
I_{1}=\mathbb{E}\left[1_{\{\mathbf{X},\mathbf{Y}\}\in\Psi_{1}^{C}\cap\Psi_{2}^{C}}\gamma_{1}(\mathbf{X})\gamma_{2}(\mathbf{Y})Cov_{\mathbf{X,Y}}\left[\ehat 1^{s}(\mathbf{X})\ehat 2^{q}(\mathbf{X}),\ehat 1^{t}(Y)\ehat 2^{r}(\mathbf{Y})\right]\right],\]
\[
I_{2}=\mathbb{E}\left[1_{\{\mathbf{X},\mathbf{Y}\}\in\Psi_{1}^{C}\cap\Psi_{2}}\gamma_{1}(\mathbf{X})\gamma_{2}(\mathbf{Y})Cov_{\mathbf{X,Y}}\left[\ehat 1^{s}(\mathbf{X})\ehat 2^{q}(\mathbf{X}),\ehat 1^{t}(\mathbf{Y})\ehat 2^{r}(\mathbf{Y})\right]\right],\]
\[
I_{3}=\mathbb{E}\left[1_{\{\mathbf{X},\mathbf{Y}\}\in\Psi_{1}\cap\Psi_{2}^{C}}\gamma_{1}(\mathbf{X})\gamma_{2}(\mathbf{Y})Cov_{\mathbf{X,Y}}\left[\ehat 1^{s}(\mathbf{X})\ehat 2^{q}(\mathbf{X}),\ehat 1^{t}(\mathbf{Y})\ehat 2^{r}(\mathbf{Y})\right]\right],\]
\[
I_{4}=\mathbb{E}\left[1_{\{\mathbf{X},\mathbf{Y}\}\in\Psi_{1}\cap\Psi_{2}}\gamma_{1}(\mathbf{X})\gamma_{2}(\mathbf{Y})Cov_{\mathbf{X,Y}}\left[\ehat 1^{s}(\mathbf{X})\ehat 2^{q}(\mathbf{X}),\ehat 1^{t}(\mathbf{Y})\ehat 2^{r}(\mathbf{Y})\right]\right].\]
Combining Eqs.~\ref{eq:cov_inside1} and~\ref{eq:cov_inside2} gives
\begin{eqnarray*}
I_{1} & = & \mathbb{E}\left[1_{\{\mathbf{X},\mathbf{Y}\}\in\Psi_{1}^{C}\cap\Psi_{2}^{C}}\gamma_{1}(\mathbf{X})\gamma_{2}(\mathbf{Y})\left(1_{\{q=r=s=t=1\}}O\left(\frac{1}{k_{1}k_{2}}\right)+o\left(\frac{1}{k_{1}k_{2}}\right)\right)\right]+o\left(\frac{1}{k_{1}^{2}}+\frac{1}{k_{2}^{2}}\right)\\
 & = & \int\left[\left(1_{\{q=r=s=t=1\}}O\left(\frac{1}{k_{1}k_{2}}\right)+o\left(\frac{1}{k_{1}k_{2}}\right)\right)\left(\gamma_{1}(x)\gamma_{2}(x)+o(1)\right)\right]\left(\int_{\{x,y\}\in\Psi_{1}^{C}\cap\Psi_{2}^{C}}dy\right)dx+o\left(\frac{1}{k_{1}^{2}}+\frac{1}{k_{2}^{2}}\right)\\
 & \leq & \int\left[\left(1_{\{q=r=s=t=1\}}O\left(\frac{1}{k_{1}k_{2}}\right)+o\left(\frac{1}{k_{1}k_{2}}\right)\right)\left(\gamma_{1}(x)\gamma_{2}(x)+o(1)\right)\right]\left(2^{d}\min_{i\in\{1,2\}}\frac{k_{i}}{M_{i}}\right)dx+o\left(\frac{1}{k_{1}^{2}}+\frac{1}{k_{2}^{2}}\right)\\
 & = & o\left(\frac{1}{\max\left(M_{1},M_{2}\right)}+\frac{1}{k_{1}^{2}}+\frac{1}{k_{2}^{2}}\right),\end{eqnarray*}
where $\arg\min_{i\in\{1,2\}}\frac{k_{i}}{M_{i}}=\arg\max\left(M_{1},M_{2}\right)$
because $k_{i}=k_{0}M_{i}^{\beta}$ by assumption $\left(\mathcal{A}.0\right)$.
Now also by Eqs. ~\ref{eq:cov_inside1} and~\ref{eq:cov_inside2},\[
I_{2}=\mathbb{E}\left[1_{\{\mathbf{X},\mathbf{Y}\}\in\Psi_{1}^{C}\cap\Psi_{2}}\gamma_{1}(\mathbf{X})\gamma_{2}(\mathbf{Y})o\left(\frac{1}{M_{2}}\right)\right]+o\left(\frac{1}{k_{1}^{2}}+\frac{1}{k_{2}^{2}}\right)=o\left(\frac{1}{M_{2}}+\frac{1}{k_{1}^{2}}+\frac{1}{k_{2}^{2}}\right).\]
Similarly, $I_{3}=o\left(\frac{1}{M_{1}}+\frac{1}{k_{1}^{2}}+\frac{1}{k_{2}^{2}}\right)$
and $I_{4}=o\left(\frac{1}{\max(M_{1},M_{2})}+\frac{1}{k_{1}^{2}}+\frac{1}{k_{2}^{2}}\right).$
Combining these results completes the proof for the case of $E_{0}$.

Now consider the case where $E_{1,1}$ holds. Specifically, assume
WLOG that $s=0$. Then Eq.~\ref{eq:cov_inside2} for $i=1$ gives
\begin{eqnarray*}
\mathbb{E}\left[\ehat 1^{t}(Y)\right] & = & 1_{\{t=2\}}O\left(\frac{1}{k_{1}}\right)+o\left(\frac{1}{k_{1}}\right),\\
\implies I_{1} & = & \mathbb{E}\left[1_{\{\mathbf{X},\mathbf{Y}\}\in\Psi_{1}^{C}\cap\Psi_{2}^{C}}\gamma_{1}(\mathbf{X})\gamma_{2}(\mathbf{Y})\left(1_{\{q=r=1,t=2\}}O\left(\frac{1}{k_{1}k_{2}}\right)+o\left(\frac{1}{k_{1}k_{2}}\right)\right)\right]+o\left(\frac{1}{k_{1}^{2}}+\frac{1}{k_{2}^{2}}\right)\\
 & = & o\left(\frac{1}{\max\left(M_{1},M_{2}\right)}+\frac{1}{k_{1}^{2}}+\frac{1}{k_{2}^{2}}\right).\end{eqnarray*}
 Similarly, since $\Psi_{1}\cap\Psi_{2}^{c}\subseteq\Psi_{2}^{c}$,
$I_{2},\, I_{3}=o\left(\frac{1}{M_{2}}+\frac{1}{k_{1}^{2}}+\frac{1}{k_{2}^{2}}\right),$
and $I_{4}=o\left(\frac{1}{M_{2}}+\frac{1}{k_{1}^{2}}+\frac{1}{k_{2}^{2}}\right).$
A similar argument for when $E_{1,2}$ holds shows that $I_{1}$ is
the same and that $I_{2},\, I_{3},$ and $I_{4}=o\left(\frac{1}{M_{1}}+\frac{1}{k_{1}^{2}}+\frac{1}{k_{2}^{2}}\right).$

Let $E_{2}=\{s,q=0;t,r\geq2\}\cup\{t,r=0;s,q\geq2\},$ $E_{3}=\{s,q,t=0;r\geq2\}\cup\{s,t,r=0;q\geq2\},$
and $E_{4}=\{q,t,r=0;s\geq2\}\cup\{s,q,r=0;t\geq2\}.$ Suppose that
$E_{2}$ holds and that WLOG $s,q=0$ and $t,r\geq2$. Then we have
\begin{eqnarray*}
Cov\left[\gamma_{1}(\mathbf{X}),\gamma_{2}(\mathbf{Y})\ehat 1^{t}(\mathbf{Y})\ehat 2^{r}(\mathbf{Y})\right] & = & \mathbb{E}\left[\gamma_{1}(\mathbf{X})\gamma_{2}(\mathbf{Y})\ehat 1^{t}(\mathbf{Y})\ehat 2^{r}(\mathbf{Y})\right]-\mathbb{E}\left[\gamma_{1}(\mathbf{X})\right]\mathbb{E}\left[\gamma_{2}(\mathbf{Y})\ehat 1^{t}(\mathbf{Y})\ehat 2^{r}(\mathbf{Y})\right]\\
 & = & \mathbb{E}\left[\gamma_{1}(\mathbf{X})\right]\mathbb{E}\left[\gamma_{2}(\mathbf{Y})\ehat 1^{t}(\mathbf{Y})\ehat 2^{r}(\mathbf{Y})\right]-\mathbb{E}\left[\gamma_{1}(\mathbf{X})\right]\mathbb{E}\left[\gamma_{2}(\mathbf{Y})\ehat 1^{t}(\mathbf{Y})\ehat 2^{r}(\mathbf{Y})\right]\\
 & = & 0,\end{eqnarray*}
 where we used the fact that $\mathbf{X}$ and $\mathbf{Y}$ are independent
to obtain the second inequality. The same result follows when either
$E_{3}$ or $E_{4}$ hold.

Proof of Eq.~\ref{eq:cov3}: from Eq.~\ref{eq:Fk_q}, \begin{eqnarray*}
Cov\left[\gamma_{1}(\mathbf{X})\ek^{q}(\mathbf{X}),\gamma_{2}(\mathbf{Y})\ek^{r}(\mathbf{Y})\right] & = & Cov\left[B_{1,q}(\mathbf{X}),B_{2,r}(\mathbf{Y})\right]+Cov\left[\mathbf{A}_{1,q}(\mathbf{X}),B_{2,r}(\mathbf{Y})\right]\\
 &  & +Cov\left[B_{1,q}(\mathbf{X}),\mathbf{A}_{2,r}(\mathbf{Y})\right]+Cov\left[\mathbf{A}_{1,q}(\mathbf{X}),\mathbf{A}_{2,r}(\mathbf{Y})\right],\end{eqnarray*}
where $B_{i,q}(\mathbf{X})=\gamma_{i}(\mathbf{X})\bz{q}(\mathbf{X})$
and $\mathbf{A}_{i,q}(\mathbf{X})=\gamma_{i}(\mathbf{X})\left(\ek^{q}(\mathbf{X})-\bz{q}(\mathbf{X})\right)=\gamma_{i}(\mathbf{X})\sum_{j=1}^{3}\mathbf{u}_{j,q}(\mathbf{X})$.
Since $\mathbf{X}$ and $\mathbf{Y}$ are independent and the only
part of $B_{i,q}(\mathbf{X})$ that is random is $\mathbf{X},$ then
$Cov\left[B_{1,q}(\mathbf{X}),B_{2,r}(\mathbf{Y})\right]=0.$ Also
$Cov\left[\mathbf{A}_{1,q}(\mathbf{X}),B_{2,r}(\mathbf{Y})\right]=0$
and $Cov\left[B_{1,q}(\mathbf{X}),\mathbf{A}_{2,r}(\mathbf{Y})\right]=0$
by Eq.~\ref{eq:cov2} since neither $E_{0}$ or $E_{1}$ hold in
these cases. This leaves only the $Cov\left[\mathbf{A}_{1,q}(\mathbf{X}),\mathbf{A}_{2,r}(\mathbf{Y})\right]$
term. By applying Eqs.~\ref{eq:cov1} and~\ref{eq:cov2}, we obtain
\[
Cov\left[\gamma_{1}(\mathbf{X})\sum_{j=2}^{3}\mathbf{u}_{j,q}(\mathbf{X}),\gamma_{2}(\mathbf{Y})\sum_{j=1}^{3}\mathbf{u}_{j,r}(\mathbf{Y})\right]=o\left(\frac{1}{M_{1}}+\frac{1}{M_{2}}+\frac{1}{k_{1}^{2}}+\frac{1}{k_{2}^{2}}\right),\]
\[
Cov\left[\gamma_{1}(\mathbf{X})\mathbf{u}_{1,q}(\mathbf{X}),\gamma_{1}(\mathbf{Y})\mathbf{u}_{1,r}(\mathbf{Y})\right]=o\left(\frac{1}{M_{1}}+\frac{1}{M_{2}}+\frac{1}{k_{1}^{2}}+\frac{1}{k_{2}^{2}}\right)\]
\[
+1_{\{q=1,r=1\}}\left(c_{7,1}\left(\frac{\gamma_{1}(x)}{\mathbb{E}_{X}\fhat{2}(x)},\frac{\gamma_{2}(x)}{\mathbb{E}_{X}\fhat{2}(x)}\right)\left(\frac{1}{M_{1}}\right)+c_{7,2}\left(\frac{-\gamma_{1}(x)\mathbb{E}_{X}\fhat{1}(x)}{\mathbb{E}_{X}\fhat{2}(x)},\frac{-\gamma_{2}(x)\mathbb{E}_{X}\fhat{1}(x)}{\mathbb{E}_{X}\fhat{2}(x)}\right)\left(\frac{1}{M_{2}}\right)\right)\]
\[
=1_{\{q=1,r=1\}}\left(c_{8,1}\left(\gamma_{1}(x),\gamma_{2}(x)\right)\left(\frac{1}{M_{1}}\right)+c_{8,2}\left(\gamma_{1}(x),\gamma_{2}(x)\right)\left(\frac{1}{M_{2}}\right)\right)+o\left(\frac{1}{M_{1}}+\frac{1}{M_{2}}+\frac{1}{k_{1}^{2}}+\frac{1}{k_{2}^{2}}\right).\]
Combining these results completes the proof.
\end{proof}
Since $\mathbf{X}_{1}$ and $\mathbf{X}_{2}$ are independent, $\mathbb{E}\left[\mathbf{p}_{1}\left(\mathbf{p}_{2}+\mathbf{q}_{2}+\mathbf{r}_{2}+\mathbf{s}_{2}\right)\right]=0.$
Under assumption $(\mathcal{A}.4$) and applying Lemma~\ref{lem:ekhat},
\[
\mathbb{E}\left[\left(\mathbf{p}_{1}+\mathbf{q}_{1}+\mathbf{r}_{1}+\mathbf{s}_{1}\right)^{2}\right]=\mathbb{E}\left[\mathbf{p}_{1}^{2}\right]+o(1)=\var\left[g\left(\mathbb{E}_{\mathbf{X}_{1}}\lhat\left(\mathbf{X}_{1}\right)\right)\right]+o(1)\]
\begin{equation}
=c_{9}\left(\mathbb{E}_{Z}\lhat\left(z\right)\right)+o(1).\label{eq:pvar}\end{equation}
 Also applying Lemma~\ref{lem:cov_ekhat} gives \begin{eqnarray*}
\mathbb{E}\left[\mathbf{q}_{1}\mathbf{q}_{2}\right] & = & c_{8,1}\left(g'\left(\mathbb{E}_{X}\lhat(x)\right),g'\left(\mathbb{E}_{X}\lhat(x)\right)\right)\left(\frac{1}{M_{1}}\right)+c_{8,2}\left(g'\left(\mathbb{E}_{X}\lhat(x)\right),g'\left(\mathbb{E}_{X}\lhat(x)\right)\right)\left(\frac{1}{M_{2}}\right)\\
 &  & +o\left(\frac{1}{M_{1}}+\frac{1}{M_{2}}+\frac{1}{k_{1}^{2}}+\frac{1}{k_{2}^{2}}\right),\\
\mathbb{E}\left[\mathbf{q}_{1}\mathbf{r}_{2}\right] & = & o\left(\frac{1}{M_{1}}+\frac{1}{M_{2}}+\frac{1}{k_{1}^{2}}+\frac{1}{k_{2}^{2}}\right),\\
\mathbb{E}\left[\mathbf{r}_{1}\mathbf{r}_{2}\right] & = & o\left(\frac{1}{M_{1}}+\frac{1}{M_{2}}+\frac{1}{k_{1}^{2}}+\frac{1}{k_{2}^{2}}\right),\end{eqnarray*}
We use Cauchy-Schwarz and Lemma~\ref{lem:ekhat} to get\begin{align*}
\mathbb{E}\left[g'\left(\mathbb{E}_{\mathbf{X}_{1}}\lhat\left(\mathbf{X}_{1}\right)\right)\ek\left(\mathbf{X}_{1}\right)\Psi\left(\mathbf{X}_{2}\right)\ek^{\lambda}\left(\mathbf{X}_{2}\right)\right]\end{align*}
 \begin{eqnarray*}
 & \leq & \sqrt{\mathbb{E}\left[\Psi^{2}\left(\mathbf{X}_{2}\right)\right]\mathbb{E}\left[\left(g'\left(\mathbb{E}_{\mathbf{X}_{1}}\lhat\left(\mathbf{X}_{1}\right)\right)\ek\left(\mathbf{X}_{1}\right)\right)^{2}\ek^{2\lambda}\left(\mathbf{X}_{2}\right)\right]}\\
 & \leq & \sqrt{\mathbb{E}\left[\Psi^{2}\left(\mathbf{X}_{2}\right)\right]\sqrt{\mathbb{E}\left[\left(g'\left(\mathbb{E}_{\mathbf{X}_{1}}\lhat\left(\mathbf{X}_{1}\right)\right)\ek\left(\mathbf{X}_{1}\right)\right)^{4}\right]\mathbb{E}\left[\ek^{4\lambda}\left(\mathbf{X}_{2}\right)\right]}}\\
 & = & \sqrt{\mathbb{E}\left[\Psi^{2}\left(\mathbf{X}_{2}\right)\right]\sqrt{O\left(\frac{1}{k_{1}^{2}}+\frac{1}{k_{2}^{2}}\right)O\left(\frac{1}{k_{1}^{2\lambda}}+\frac{1}{k_{2}^{2\lambda}}\right)}}\\
 & = & \sqrt{\mathbb{E}\left[\Psi^{2}\left(\mathbf{X}_{2}\right)\right]}o\left(\frac{1}{k_{1}^{\lambda/2}}+\frac{1}{k_{2}^{\lambda/2}}\right).\end{eqnarray*}
Lemma~\ref{lem:bounded} and assumption $\left(\mathcal{A}.5\right)$
implies that $\mathbb{E}\left[\Psi^{2}\left(\mathbf{X}_{2}\right)\right]=O(1)$
and from assumption $(\mathcal{A}.3),$ $o\left(\frac{1}{k_{i}^{\lambda/2}}\right)=o\left(\frac{1}{M_{i}}\right).$
This implies that $\mathbb{E}\left[\mathbf{q}_{1}\mathbf{s}_{2}\right]=o\left(\frac{1}{M_{1}}+\frac{1}{M_{2}}\right).$
Similarly, $\mathbb{E}\left[\mathbf{r}_{1}\mathbf{s}_{2}\right]=o\left(\frac{1}{M_{1}}+\frac{1}{M_{2}}\right)$
and $\mathbb{E}\left[\mathbf{s}_{1}\mathbf{s}_{2}\right]=o\left(\frac{1}{M_{1}}+\frac{1}{M_{2}}\right).$
So finally, \begin{eqnarray*}
\var\left[\gk\right] & = & \frac{c_{9}\left(\mathbb{E}_{X}\lhat\left(x\right)\right)}{N}+\frac{N-1}{N}\mathbb{E}\left[\mathbf{q}_{1}\mathbf{q}_{2}\right]+o\left(\frac{1}{M_{1}}+\frac{1}{M_{2}}+\frac{1}{N}+\frac{1}{k_{1}^{2}}+\frac{1}{k_{2}^{2}}\right)\\
 & = & c_{9}\left(\mathbb{E}_{X}\lhat\left(x\right)\right)\left(\frac{1}{N}\right)+c_{8,1}\left(g'\left(\mathbb{E}_{X}\lhat(x)\right),g'\left(\mathbb{E}_{X}\lhat(x)\right)\right)\left(\frac{1}{M_{1}}\right)\\
 &  & +c_{8,2}\left(g'\left(\mathbb{E}_{X}\lhat(x)\right),g'\left(\mathbb{E}_{X}\lhat(x)\right)\right)\left(\frac{1}{M_{2}}\right)+o\left(\frac{1}{M_{1}}+\frac{1}{M_{2}}+\frac{1}{N}+\frac{1}{k_{1}^{2}}+\frac{1}{k_{2}^{2}}\right)\\
 & = & c_{9}\left(L\left(x\right)\right)\left(\frac{1}{N}\right)+c_{8,1}\left(g'\left(L(x)\right),g'\left(L(x)\right)\right)\left(\frac{1}{M_{1}}\right)+c_{8,2}\left(g'\left(L(x)\right),g'\left(L(x)\right)\right)\left(\frac{1}{M_{2}}\right)\\
 &  & +o\left(\frac{1}{M_{1}}+\frac{1}{M_{2}}+\frac{1}{N}+\frac{1}{k_{1}^{2}}+\frac{1}{k_{2}^{2}}\right)\\
 & = & c_{9}\left(\frac{1}{N}\right)+c_{8,1}\left(\frac{1}{M_{1}}\right)+c_{8,2}\left(\frac{1}{M_{2}}\right)+o\left(\frac{1}{M_{1}}+\frac{1}{M_{2}}+\frac{1}{N}+\frac{1}{k_{1}^{2}}+\frac{1}{k_{2}^{2}}\right),\end{eqnarray*}
 where the second to last step follows from $\mathbb{E}_{X}\lhat(X)=L(X)+o(1).$

\bibliographystyle{ieeetr}
\bibliography{divergence_isit2014}

\end{document}